\def\doi#1{\href{https://doi.org/\detokenize{#1}}{\url{https://doi.org/\detokenize{#1}}}}
\setlist[enumerate]{leftmargin=1cm}
\newtheorem{theorem}{Theorem}
\newtheorem{lemma}[theorem]{Lemma}
\newtheorem{definition}[theorem]{Definition}
\newtheorem{corollary}[theorem]{Corollary}
\newtheorem{observation}[theorem]{Observation}
\newcommand{\+}{\textsuperscript{+}}
\newcommand{\G}{\ensuremath{\mathcal{G}}}
\newcommand{\N}{\ensuremath{\mathbb{N}}}
\newcommand{\B}{\ensuremath{\mathcal{B}}}
\newcommand{\I}{\ensuremath{\mathcal{I}}}
\newcommand{\NP}{\ensuremath{\mathsf{NP}}}
\renewcommand{\P}{\ensuremath{\mathsf{P}}}
\newcommand{\A}{\ensuremath{\mathcal{A}}}
\renewcommand{\O}{\ensuremath{{\cal O}}}
\newcommand{\cf}{\ensuremath{{\cal F}}}
\newcommand{\cl}{\ensuremath{{\cal L}}}
\newcommand\mdoubleplus{\ensuremath{\mathbin{+\mkern-3mu+}}}
\tikzstyle{treeedge}=[line width=1mm]
\begin{document}

\title{Graph Search Trees and Their Leaves}
	
	\author{Robert Scheffler}

	\date{Institute of Mathematics, Brandenburg University of Technology, Cottbus, Germany \\ \texttt{robert.scheffler@b-tu.de}}

  	\maketitle

 	\begin{abstract}
		Graph searches and their respective search trees are widely used in algorithmic graph theory. The problem whether a given spanning tree can be a graph search tree has been considered for different searches, graph classes and search tree paradigms. Similarly, the question whether a particular vertex can be visited last by some search has been studied extensively in recent years. We combine these two problems by considering the question whether a vertex can be a leaf of a graph search tree. We show that for particular search trees, including DFS trees, this problem is easy if we allow the leaf to be the first vertex of the search ordering. We contrast this result by showing that the problem becomes hard for many searches, including DFS and BFS, if we forbid the leaf to be the first vertex. Additionally, we present several structural and algorithmic results for search tree leaves of chordal graphs.
	\end{abstract}

\section{Introduction}

Graph searches are an extensively used concept in algorithmic graph theory. The searches BFS and DFS belong to the most basic algorithms and are used in a wide range of applications as subroutines. The same holds for more sophisticated searches as LBFS, LDFS, and MCS (see, e.g., \cite{berry2004maximum,corneil2013ldfs,corneil2009lbfs}).

An important structure closely related to a graph search is the corresponding search tree. Such a tree contains all the vertices of the graph and for every vertex different from the start vertex exactly one edge to a vertex preceding it in the search ordering. Those trees can be of particular interest as for instance the tree obtained by a BFS contains the shortest paths from the root to all other vertices in the graph and DFS trees are used for fast planarity testing~\cite{hopcroft1974efficient}. Furthermore, trees generated by LBFS were used to design a linear-time implementation of the search LDFS for chordal graphs~\cite{beisegel2020linear}.

The problem of deciding whether a given spanning tree of a graph can be obtained by a particular search was introduced by Hagerup~\cite{hagerup1985biconnected} in 1985, who presented a linear-time algorithm that recognizes DFS trees. In the same year, Hagerup and Nowak~\cite{hagerup1985recognition} gave a similar result for the BFS tree recognition. In 2021, Beisegel~et~al.~\cite{beisegel2021recognition} presented a more general framework for the search tree recognition problem. They introduced the term \cf-tree for search trees where a vertex is connected to its first visited neighbor, i.e., BFS-like trees, and \cl-trees for search trees where a vertex is connected to its most recently visited neighbor, i.e., DFS-like trees. They showed, among other things, that \cf-tree recognition is \NP-hard for LBFS, LDFS, and MCS on weakly chordal graphs, while the problem can be solved in polynomial time for all three searches on chordal graphs. These results are complemented in~\cite{scheffler2022recognition}, where it is shown that the recognition of \cf-trees of DFS and \cl-trees of BFS is \NP-hard, a strong contrast to the polynomial results for \cf-trees of BFS and \cl-trees of DFS.

Another feature of a graph search that was used several times within algorithms are its end-vertices, i.e., the vertices that can be visited last by the search. Some of these end-vertices have nice properties. One example are the end-vertices of LBFS on chordal graphs. These vertices are simplicial, a fact that was used by Rose et al.~\cite{rose1976algorithmic} to design a linear-time recognition algorithm for chordal graphs. Furthermore, the
end-vertices of LBFS are strongly related to dominating pairs of AT-free graphs~\cite{corneil1999linear} and transitive orientations of comparability graphs~\cite{habib2000lex-bfs}. Thus, it is well motivated to consider the end-vertex problem, i.e., the question whether a given vertex of a graph is an end-vertex of a particular search. Introduced in 2010 by Corneil et~al.~\cite{corneil2010end}, the problem has gained much attention by several researchers, leading to a wide range of hardness results and algorithms for different searches on different graph classes (see, e.g.,~\cite{beisegel2019end-vertex,charbit2014influence,gorzny2017end-vertices,kratsch2015end,rong2022graph,zou2022end}).

If we compare the known complexity results for the end-vertex problem and the recognition problem of \cf-trees, we notice strong similarities between these two problems. Motivated by that fact, a generalization of both problems, called \emph{Partial Search Order Problem}, was introduced in~\cite{scheffler2022linearizing}. This problem asks whether a given partial order on a graph's vertex set can be linearly extended by a search ordering. Another way to combine the end-vertex problem with the search tree recognition problems is motivated by the following observation: If a vertex is the end-vertex of some search ordering, then it is a leaf in the respective search tree, no matter whether we consider the \cf-tree or the \cl-tree. Therefore, we ask whether a given vertex can be a leaf of a search tree constructed by a particular search. Note that this problem was first suggested in 2020 by Michel Habib. Here, we study its complexity for $\cf$-trees and $\cl$-trees of several searches, including BFS, DFS, LBFS, LDFS, and MCS.

\paragraph{Our Contribution.} We consider two different types of leaves of search trees. A leaf is a \emph{root leaf} of a search tree if it is the start vertex of the respective search ordering. All other leaves of a search tree are called \emph{branch leaves}. We show that it is easy for all the searches considered here to identify the possible root leaves both for \cf-trees and for \cl-trees. For some searches, including DFS, these results imply directly that the general problem of recognizing leaves of \cl-trees is easy. This is contrasted by the result that, at least for DFS, the recognition of branch leaves of \cl-trees is \NP-hard. We show that the same holds for \cf-tree branch leaves of several searches, including DFS and BFS. In contrast, the leaves of \cl-trees of BFS can be recognized in polynomial time for bipartite graphs. This is quite surprising since the \cl-tree recognition problem of BFS is \NP-hard on bipartite graphs~\cite{scheffler2022recognition} while \cf-trees of BFS can be recognized efficiently on general graphs~\cite{hagerup1985recognition}. In the final section we consider chordal graphs and show that on this graph class the branch leaves of almost all considered searches can be recognized in linear time.

\section{Preliminaries}

\subsection{General Notation}
The graphs considered in this paper are finite, undirected, simple and connected. Given a graph $G$, we denote by $V(G)$ the \emph{set of vertices} and by $E(G)$ the \emph{set of edges}. The terms $n(G)$ and $m(G)$ describe the number of vertices and edges of $G$, respectively, i.e., $n(G) = |V(G)|$ and $m(G) = |E(G)|$. For a vertex $v\in V(G)$, we denote by $N_G(v)$ the \emph{(open) neighborhood} of $v$ in $G$, i.e., the set $N_G(v)=\{u\in V\mid uv\in E\}$ where $uv$ denotes an edge between $u$ and $v$. The \emph{closed neighborhood} of a vertex $v$ is the union of the open neighborhood of $v$ with the set $\{v\}$ and is denoted by $N_G[v]$. Given a set $S \subseteq V(G)$, the term $G[S]$ describes the subgraph of $G$ that is induced by $S$.

The \emph{distance} $dist_G(v,w)$ of two vertices $v$ and $w$ in $G$ is the length (i.e., the number of edges) of the shortest $v$-$w$-path in $G$. The \emph{eccentricity} $ecc_G(v)$ of a vertex $v$ in $G$ is the largest distance of $v$ to any other vertex in $G$. The \emph{diameter} $diam(G)$ of $G$ is the largest eccentricity of a vertex in $G$ and the \emph{radius} $rad(G)$ of $G$ is the smallest eccentricity of a vertex in $G$. A vertex $v$ with $ecc_G(v) = rad(G)$ is called \emph{central vertex} of $G$. The set $N_G^\ell(v)$ contains all vertices whose distance to the vertex $v$ in $G$ is equal to $\ell$.

A \emph{vertex ordering} of $G$ is a bijection $\sigma: \{1,2,\dots,|V(G)|\}\rightarrow V(G)$. We denote by $\sigma^{-1}(v)$ the position of vertex $v\in V(G)$. Given two vertices $u$ and $v$ in $G$ we say that $u$ is \emph{to the left} (resp. \emph{to the right}) of $v$ if $\sigma^{-1}(u)<\sigma^{-1}(v)$ (resp. $\sigma^{-1}(u)>\sigma^{-1}(v)$) and we denote this by $u \prec_{\sigma}v$ (resp.  $u \succ_{\sigma}v$). 
Given two orderings $\sigma$ of $X$ and $\rho$ of $Y$ with $X \cap Y = \emptyset$, the ordering $\tau = \sigma \mdoubleplus \rho$ is the \emph{concatenation} of $\sigma$ and $\rho$, i.e., $\tau(i) = \sigma(i)$ if $1 \leq i \leq |X|$ and $\tau(i) = \rho(i - |X|)$ if $|X| < i \leq |X \cup Y|$. If $v \notin X$, then $v \mdoubleplus \sigma$ (or $\sigma \mdoubleplus v$) denotes the concatenation of $\sigma$ with the linear ordering of the set $\{v\}$. 

A \emph{clique} in a graph $G$ is a set of pairwise adjacent vertices and an \emph{independent set} in $G$ is a set of pairwise nonadjacent vertices. A clique $C$ is \emph{dominating} if any vertex of $G$ is either in $C$ or has a neighbor in $C$. A vertex $v$ is \emph{simplicial} if its neighborhood induces a clique. A vertex $v$ of a connected graph $G$ is a \emph{cut vertex} if $G - v$ is not connected. Two vertices $u$ and $w$ form a \emph{two-pair} if any induced path between $u$ and $w$ has length two.

A graph is \emph{bipartite} if its vertex set can be partitioned into two independent sets $X$ and $Y$. A graph is \emph{weakly chordal} if $G$ contains neither an induced cycle of the length $\geq 5$ nor the complement of such an induced cycle. A graph is \emph{chordal} if it does not contain an induced cycle of length $\geq 4$. A vertex ordering $\sigma$ of a graph $G$ is a \emph{perfect elimination ordering} if any vertex $v$ is simplicial in the graph $G[S(v)]$ with $S(v) := \{w~|~w \prec_\sigma v\}$. A graph $G$ has a PEO if and only if $G$ is chordal~\cite{rose1970triangulated}. A \emph{split graph} $G$ is a graph whose vertex set can be partitioned into sets $C$ and $I$, such that $C$ is a clique in $G$ and $I$ is an independent set in $G$. It is easy to see that any split graph is chordal. 

A \emph{tree} is an acyclic connected graph. A \emph{spanning tree} of a graph $G$ is an acyclic connected subgraph of $G$ which contains all vertices of $G$. A tree together with a distinguished \emph{root vertex} $r$ is said to be \emph{rooted}. In such a rooted tree $T$, a vertex $v$ is an \emph{ancestor} of vertex $w$ if $v$ is an element of the unique path from $w$ to the root $r$. A vertex $ w $ is called the \emph{descendant} of $ v $ if $ v $ is the ancestor of $ w $. Vertex $v$ is the \emph{parent} of vertex $w$ if $v$ is an ancestor of $w$ and is adjacent to $w$ in $T$. Vertex $w$ is called the \emph{child} of $ v $ if $ v $ is the parent of $ w $.

\subsection{Searches, Search Trees and Their Leaves}
In the most general sense, a \emph{graph search} $\A$ is a function that maps every graph $G$ to a set $\A(G)$ of vertex orderings of $G$. The elements of the set $\A(G)$ are the \emph{$\A$-orderings of $G$}. The graph searches considered in this paper can be formalized adapting a framework introduced by Corneil~et~al.~\cite{corneil2016tie} (a similar framework is given in~\cite{krueger2011general}). This framework uses subsets of $\N^+$ as vertex labels. Whenever a vertex is numbered, its index in the search ordering is added to the labels of its unnumbered neighbors. The search $\A$ is defined via a strict partial order $\prec_{\cal A}$ on the elements of ${\cal P}(\N^+)$ (see \cref{algo:ls}). The respective $\A$-orderings are exactly those vertex orderings that can be found by this framework using the partial label order $\prec_\A$.

\newcommand{\searchlabel}{\emph{label}}
\begin{algorithm2e}[t]
\small
    \KwIn{A graph $G$} 
    \KwOut{A search ordering $\sigma$ of $G$}
    \Begin{
		\lForEach{$v \in V(G)$}{\searchlabel($v$) $\leftarrow$ $\emptyset$}
		\For{$i$ $\leftarrow$ $1$ \KwTo $n(G)$}{
			\emph{Eligible} $\leftarrow$ $\{x \in V(G)~|~x$ unnumbered and $\nexists$ unnumbered $y \in V(G)$ \\ \mbox{}\phantom{\emph{Eligible} $\leftarrow$ $\{x \in V(G)~|~$}such that \searchlabel($x$) $\prec_{\cal A}$ \searchlabel($y$)$\}$\;
			let $v$ be an arbitrary vertex in \emph{Eligible}\;\label{line:ls}
			$\sigma(i)$ $\leftarrow$ $v$\tcc*{assigns to $v$ the number $i$}
			\lForEach{unnumbered vertex $w \in N(v)$}{\searchlabel($w$) $\leftarrow$ \searchlabel($w$) $\cup$ $\{i\}$}
		}
	}
    \caption{Label Search($\prec_{\cal A}$)}\label{algo:ls}
\end{algorithm2e}

In the following, we define the searches considered in this paper by presenting suitable partial orders $\prec_{\cal A}$ (see~\cite{corneil2016tie}). The \emph{Generic Search} (GS) is equal to the Label Search($\prec_{GS}$) where $A \prec_{GS} B$ if and only if $A = \emptyset$ and $B \neq \emptyset$. Thus, any vertex with a numbered neighbor can be numbered next.

The partial label order $\prec_{BFS}$ for \emph{Breadth First Search} (BFS) is defined as follows: $A \prec_{BFS} B$ if and only if $A = \emptyset$ and $B \neq \emptyset$ or $\min(A) > \min (B)$. For the \emph{Lexicographic Breadth First Search} (LBFS)~\cite{rose1976algorithmic} we consider the partial order $\prec_{LBFS}$ with $A \prec_{LBFS} B$ if and only if $A \subsetneq B$ or $\min(A \setminus B) > \min(B \setminus A)$. Both BFS and LBFS are \emph{layered}, i.e., the sets $N_G^\ell(r)$ are consecutive within orderings starting in $r$. We sometimes use the term \emph{layer} if we refer to a set $N^\ell_G(r)$.

The partial label order $\prec_{DFS}$\index{DFS|textit} for \emph{Depth First Search} (DFS) is defined as follows: $A \prec_{DFS} B$ if and only if $A = \emptyset$ and $B \neq \emptyset$ or $\max(A) < \max (B)$. For the \emph{Lexicographic Depth First Search}~\cite{corneil2008unified} we use the strict partial order $\prec_{LDFS}$ where $A \prec_{LDFS} B$ if and only if $A \subsetneq B$ or $\max(A \setminus B) < \max(B \setminus A)$.

The \emph{Maximum Cardinality Search} (MCS)~\cite{tarjan1984simple} uses the partial order $\prec_{MCS}$ with $A \prec_{MCS} B$ if and only if $|A| < |B|$. The \emph{Maximal Neighborhood Search} (MNS)~\cite{corneil2008unified} is defined using $\prec_{MNS}$ with $A \prec_{MNS} B$ if and only if $A \subsetneq B$. It follows directly from these partial label orders, that any LBFS, LDFS, and MCS ordering is also an MNS ordering. Furthermore, the orderings of all presented searches are GS orderings.

In general, there can occur ties during an application of all the searches considered in this paper. This problem can be solved by considering so-called \emph{+-searches}. Let $\A$ be a graph search. Given an arbitrary vertex ordering $\rho$ of a graph $G$, an $\A$-ordering $\sigma = (v_1, \ldots, v_n)$ of $G$ is the $\A\+(\rho)$-ordering of $G$ if $\sigma$ fulfills the following condition: for any $i \in \{0, \ldots, n-1\}$, the vertex $v_{i+1}$ is the \emph{leftmost} vertex in $\rho$ such that $(v_1, \ldots, v_i, v_{i+1})$ is a prefix of an $\A$-ordering of $G$.\footnote{Note that other authors define the vertex $v_{i+1}$ to be the \emph{rightmost} vertex in $\rho$.}

Searches as BFS and DFS are often used to compute corresponding graph search trees. Beisegel et~al.~\cite{beisegel2021recognition} formalized the different concepts of search trees as follows.

\begin{definition}[Beisegel et~al.~\cite{beisegel2021recognition}]
	Let $\sigma$ be a GS ordering of a connected graph $ G $. The \emph{\cf-tree} of $\sigma$ is the spanning tree of $G$ containing the edge from each vertex $v$ with $\sigma^{-1}(v) > 1$ to its leftmost neighbor in $\sigma$.
	
	The \emph{\cl-tree} of $\sigma$ is the spanning tree containing the edge from each vertex $v$ with $\sigma^{-1}(v) > 1$ to its rightmost neighbor $w$ in $\sigma$ with $ w \prec_\sigma v$.
\end{definition}

We will need the following two lemmas about DFS \cl-trees.

\begin{lemma}[Tarjan~\cite{tarjan1972depth}]\label{lemma:dfstrees}
	Let $G$ be a graph and let $T$ be a spanning tree of $G$. Then $T$ is a DFS \cl-tree of $G$ if and only if for each edge $uv \in E(G)$ vertex $u$ is either an ancestor or a descendant of $v$ in $T$.
\end{lemma}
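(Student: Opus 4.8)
The plan is to argue both implications directly from \cref{algo:ls} run with the label order $\prec_{DFS}$, using only the elementary observation that, in any GS ordering $\sigma$ of $G$, at the moment a vertex $v$ is numbered its label equals $\{\sigma^{-1}(w)\mid w\in N_G(v),\ w\prec_\sigma v\}$; hence the maximum of this set is the position of the rightmost left-neighbour of $v$, i.e.\ of the parent $p(v)$ of $v$ in the \cl-tree of $\sigma$. Throughout I root the \cl-tree at the start vertex $\sigma(1)$, and for the ``if'' direction I fix a root $r$ of $T$ witnessing the stated ancestor--descendant property.

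For ``only if'', let $\sigma$ be a DFS ordering with \cl-tree $T$ and $uv\in E(G)$ with, say, $u\prec_\sigma v$; I will show $u$ is an ancestor of $v$. Climbing the tree path $v=x_0, x_1=p(x_0), x_2=p(x_1), \dots, x_k=\sigma(1)$, each step goes strictly left in $\sigma$ and the path ends at the first vertex, so either $u=x_j$ for some $j$ (done) or $x_{j+1}\prec_\sigma u\prec_\sigma x_j$ for a unique $j$. The case $j=0$ is immediately impossible: $u$ would be a left-neighbour of $v$ lying to the right of $x_1=p(v)$, contradicting that $x_1$ is the rightmost left-neighbour of $v$. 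For $j\ge 1$, consider the step at which $x_j$ is numbered: $u$ is already numbered ($u\prec_\sigma x_j$) while $v$ is not ($x_j\prec_\sigma x_0=v$), so the label of the unnumbered $v$ contains $\sigma^{-1}(u)$, whence $\max(\mathit{label}(v))\ge\sigma^{-1}(u)>\sigma^{-1}(x_{j+1})=\max(\mathit{label}(x_j))$. Thus $\mathit{label}(x_j)\prec_{DFS}\mathit{label}(v)$ and $x_j$ is not eligible at that step --- a contradiction. (If instead $v\prec_\sigma u$, the symmetric argument makes $u$ a descendant of $v$.)

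For ``if'', let $\sigma$ be an arbitrary pre-order traversal of $T$ starting at $r$; I will verify that $\sigma$ is a DFS ordering and that its \cl-tree is $T$. In a pre-order the subtree of any vertex forms a contiguous block, so the proper ancestors of a vertex precede it and its proper descendants follow it. By the hypothesis on $T$ the $G$-neighbours of a vertex $v$ split into these two groups, so the left-neighbours of $v$ are exactly its $G$-neighbours among its ancestors; the $\sigma$-latest ancestor of $v$ is $p(v)$, which is a $G$-neighbour (tree edge), so $p(v)$ is the rightmost left-neighbour of $v$. This shows the \cl-tree of $\sigma$ is $T$, and also that $\max(\mathit{label}(v))=\sigma^{-1}(p(v))$ when $v$ is numbered. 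For eligibility, let $w$ still be unnumbered when $v=\sigma(i)$ is about to be numbered; its already-numbered $G$-neighbours are ancestors of $w$ (descendants of $w$ follow $w$, hence $v$), and such an ancestor $a$ must in fact be an ancestor of $v$ --- otherwise the whole (contiguous) subtree of $a$, which contains $w$, would precede $v$. Since the ancestors of $v$ occurring before $v$ appear in root-to-$p(v)$ order, $\sigma^{-1}(a)\le\sigma^{-1}(p(v))=\max(\mathit{label}(v))$; hence $\max(\mathit{label}(w))\le\max(\mathit{label}(v))$ (or $\mathit{label}(w)=\emptyset$), so $\mathit{label}(v)\not\prec_{DFS}\mathit{label}(w)$ and $v$ is eligible. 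Therefore $\sigma$ is a valid DFS ordering.

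The only real work will be the bookkeeping in the first direction: locating the vertex $x_j$ on the tree path and checking the timing --- that at the step numbering $x_j$ the vertex $u$ is already numbered while $v$ is not. Everything in the second direction then follows from the single fact that a pre-order of $T$ never interleaves a subtree with vertices outside it.
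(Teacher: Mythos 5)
Your proof is correct. Note, however, that the paper does not prove this statement at all: it is quoted as a known result of Tarjan and used as a black box, so there is no in-paper argument to compare against. Your write-up is a self-contained verification inside the paper's Label Search($\prec_{DFS}$) framework, which is a different (and more elementary, framework-internal) route than Tarjan's original treatment via the recursive DFS procedure and the absence of cross edges in palm trees. Both directions check out: in the ``only if'' direction the key observation that $\max(\mathit{label}(x_j))=\sigma^{-1}(x_{j+1})$ at the moment $x_j$ is numbered, while the still-unnumbered $v$ already carries $\sigma^{-1}(u)>\sigma^{-1}(x_{j+1})$ in its label, correctly contradicts eligibility under $\prec_{DFS}$; in the ``if'' direction your contiguity argument shows both that any pre-order of $T$ has \cl-tree $T$ and that it is a valid DFS ordering (the latter part is, in effect, a re-proof of the spirit of \cref{lemma:tree}, which you could not have cited anyway since that lemma presupposes $T$ is already known to be a DFS \cl-tree). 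Two minor points worth smoothing: the lemma's ancestor/descendant condition presupposes a rooting of $T$, which you handle by fixing a witness root --- it would be worth saying explicitly that the \cl-tree is always understood as rooted at the start vertex, so the two rootings agree; and in the eligibility check the case $v=r$ (empty label at step one, where every vertex is eligible) should be disposed of separately, since $\max(\mathit{label}(v))=\sigma^{-1}(p(v))$ is meaningless for the root. Neither affects correctness.
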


\begin{lemma}[Beisegel et al.~\cite{beisegel2020linear}]\label{lemma:tree}
Let $T$ be an $\cl$-tree of some DFS ordering of $G$ rooted in $s$ and let $\sigma$ be a DFS ordering
of $T$ starting with $s$. Then $\sigma$ is a DFS ordering of $G$ with $\cl$-tree $T$. 
\end{lemma}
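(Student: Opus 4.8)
The plan is to combine \cref{lemma:dfstrees} with the rigid structure of a depth-first search run on a tree. First I would record the elementary facts about $\sigma$ viewed as a DFS of the tree $T$ from $s$. Since $T$ is a tree, its only spanning tree is $T$ itself, so the $\cl$-tree of $\sigma$ inside $T$ is $T$ rooted at $s$, and the ancestor/descendant relation produced by the traversal coincides with that of $T$ rooted at $s$. Consequently every vertex of $T$ is numbered before all of its $T$-descendants; the $T$-ancestors of any fixed vertex are numbered in order of increasing depth; and, writing $w$ for the vertex numbered at some step $j$, every vertex numbered after step $j$ but before $w$ leaves the recursion stack lies in the subtree of $w$ — more precisely, inside one of the (by then fully explored) subtrees rooted at children of $w$ preceding, in $w$'s child order, the child whose subtree is currently being entered.

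Next I would show by induction on $i$ that $(\sigma(1),\dots,\sigma(i+1))$ is a prefix of a DFS ordering of $G$; equivalently, that after $\sigma(1),\dots,\sigma(i)$ have been numbered, $\sigma(i+1)$ is $\prec_{DFS}$-eligible in $G$. Let $w$ be the $T$-parent of $\sigma(i+1)$. In a DFS of $T$ the next vertex is always a child of some vertex on the current active path, so $w$ lies on the active path of $\sigma(i)$ and is in particular numbered. By \cref{lemma:dfstrees}, the hypothesis that $T$ rooted at $s$ is a DFS $\cl$-tree of $G$ means that every edge of $G$ joins a $T$-ancestor to a $T$-descendant; hence every $G$-neighbour of $\sigma(i+1)$ is a $T$-ancestor or a $T$-descendant of it. All $T$-descendants of $\sigma(i+1)$ are still unnumbered, so the numbered $G$-neighbours of $\sigma(i+1)$ are all $T$-ancestors; since these are numbered in order of increasing depth and $w$ itself — being the $T$-parent, and an edge of $T$, hence of $G$ — is among them, $w$ is the most recently numbered $G$-neighbour of $\sigma(i+1)$. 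To complete the eligibility step I would rule out an unnumbered vertex $y$ with a numbered $G$-neighbour $z$ satisfying $\sigma^{-1}(z)>\sigma^{-1}(w)$: by the structural facts above such a $z$ lies in a fully explored subtree rooted at a child of $w$; applying \cref{lemma:dfstrees} to the edge $yz$ shows that $y$ is a $T$-ancestor or $T$-descendant of $z$, and in both cases $y$ is forced to lie either inside that already-numbered subtree or on the active path of $\sigma(i)$ from $s$ down to $w$, so $y$ would be numbered — a contradiction. Thus $\sigma(i+1)$ is eligible, and by induction $\sigma$ is a DFS ordering of $G$.

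For the last assertion, the $\cl$-tree of $\sigma$ in $G$ joins every vertex $v\neq s$ to its most recently numbered $G$-neighbour, which the computation in the previous paragraph identifies as the $T$-parent of $v$; hence the $\cl$-tree of $\sigma$ is exactly $T$. I expect the main obstacle to be the eligibility step: one must argue that no edge of $G$ can lure the search toward a vertex other than $\sigma(i+1)$, and this is precisely where \cref{lemma:dfstrees} is indispensable, since it confines every potentially distracting edge to an ancestor-descendant pair of $T$, after which the stack discipline of a tree-DFS pins both endpoints inside the already-visited part of $T$. The remaining work — keeping track of which child-subtrees of $w$ are fully explored at step $i$ — is fiddly but routine.
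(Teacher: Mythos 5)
The paper does not prove \cref{lemma:tree} at all -- it is quoted from Beisegel et al.\ with a citation -- so there is no in-paper argument to compare against; your proposal supplies a self-contained proof, and it is correct. Your route is the natural one: use \cref{lemma:dfstrees} to confine every edge of $G$ to an ancestor--descendant pair of $T$ rooted at $s$, and then show by induction that each $\sigma(i+1)$ is eligible for DFS on $G$. The key computation -- that the numbered $G$-neighbours of $\sigma(i+1)$ are exactly numbered $T$-ancestors, so its label's maximum is the index of its $T$-parent $w$, while any unnumbered $y$ with a numbered $G$-neighbour $z$ later than $w$ would force $y$ (again via \cref{lemma:dfstrees}) either onto the numbered $s$--$w$ path or into an already fully explored child-subtree of $w$ -- is sound, and it simultaneously yields the second assertion, since it identifies the rightmost earlier $G$-neighbour of each vertex as its $T$-parent, so the $\cl$-tree of $\sigma$ is $T$. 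The only places where you wave at ``elementary facts'' (the numbered set is always a subtree of $T$ containing $s$, ancestors are numbered in order of increasing depth, and every vertex numbered after $w$ up to the moment a new child of $w$ is entered lies in a fully explored child-subtree of $w$) are genuinely routine for the label-search formalization of DFS restricted to a tree, and each follows from the eligibility rule in a line or two, so the argument stands as a complete proof of the cited lemma.
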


In this paper, we consider the leaves of search trees. For both $\cf$-trees and $\cl$-trees, we distinguish two different types of leaves.

\begin{definition}
Let $\sigma$ be a GS ordering of a connected graph $G$. A vertex $v \in V(G)$ is an \emph{\cf-leaf (\cl-leaf)} of $\sigma$ if $v$ is a leaf in the \cf-tree (\cl-tree) of $\sigma$. If $v$ is the first vertex of $\sigma$, then it is the \emph{\cf-root leaf (\cl-root leaf)} of $\sigma$, otherwise it is an \emph{\cf-branch leaf (\cl-branch leaf) of $\sigma$}.
\end{definition}

As the graph with exactly one vertex has no leaf in its spanning tree, we will consider only graphs with at least two vertices.

\section{Root Leaves}

We start this section with the simple observation that \cf-root leaves of GS orderings of a graph $G$ are quite boring as they are exactly the leaves of $G$.

\begin{observation}
Let $G$ be a connected graph with $n(G) \geq 2$. The following conditions are equivalent for a vertex $v \in V(G)$.
\hspace{1cm}
\begin{enumerate}[(i),leftmargin=1cm]
  \item Vertex $v$ is the \cf-root leaf of some GS ordering of $G$.
  \item Vertex $v$ is the \cf-root leaf of every GS ordering of $G$ starting in $v$.
  \item Vertex $v$ is a leaf of $G$.
\end{enumerate}
\end{observation}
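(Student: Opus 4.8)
The three conditions form a cycle of implications; since $(ii)\Rightarrow(i)$ is immediate (any GS ordering starting in $v$ witnesses $(i)$, and such an ordering exists because $G$ is connected — just take Generic Search from $v$), it suffices to prove $(i)\Rightarrow(iii)$ and $(iii)\Rightarrow(ii)$.

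For $(i)\Rightarrow(iii)$, suppose $v$ is the \cf-root leaf of some GS ordering $\sigma$, so $\sigma^{-1}(v)=1$ and $v$ is a leaf of the \cf-tree $T$ of $\sigma$. In the \cf-tree, every vertex $w$ with $\sigma^{-1}(w)>1$ is joined to its leftmost neighbor in $\sigma$. Since $v$ sits in position $1$, it is the leftmost neighbor in $\sigma$ of \emph{every} neighbor of $v$ in $G$; hence each $u\in N_G(v)$ has the edge $uv$ in $T$, i.e. $v$ is adjacent in $T$ to all of $N_G(v)$. As $v$ is a leaf of $T$, this forces $|N_G(v)|=1$, so $v$ is a leaf of $G$.

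For $(iii)\Rightarrow(ii)$, let $v$ be a leaf of $G$ and let $\sigma$ be any GS ordering of $G$ with $\sigma^{-1}(v)=1$ (again such orderings exist, as GS from $v$ produces one). Let $u$ be the unique neighbor of $v$ in $G$. In the \cf-tree $T$ of $\sigma$, the only vertex that could possibly be joined to $v$ is $u$ (no other vertex is adjacent to $v$ in $G$), and indeed $u$ is joined to its leftmost neighbor, which may or may not be $v$; in either case $v$ receives at most one incident edge in $T$, namely a possible edge to $u$. Since $T$ is a spanning tree it is connected, so $v$ has degree at least one in $T$; therefore $v$ has degree exactly one in $T$, i.e. $v$ is a leaf of $T$ and hence the \cf-root leaf of $\sigma$. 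This holds for every GS ordering starting in $v$, giving $(ii)$.

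The argument is entirely elementary; the only point requiring a moment's care is the observation in $(i)\Rightarrow(iii)$ that a vertex in position $1$ is automatically the leftmost neighbor (in $\sigma$) of each of its graph-neighbors, so that \emph{all} edges at $v$ in $G$ survive into the \cf-tree — this is what makes the root-leaf case collapse to the trivial notion of a graph leaf.
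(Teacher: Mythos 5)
Your proof is correct, and it is exactly the elementary argument the paper leaves implicit by stating this as an observation without proof: the start vertex is automatically the leftmost neighbor of each of its graph-neighbors, so its degree in the \cf-tree equals its degree in $G$, which collapses the root-leaf condition to being a leaf of $G$. (One tiny imprecision in $(iii)\Rightarrow(ii)$: since $v$ is in position $1$, the edge $uv$ is in fact always in the \cf-tree, but your degree-counting argument works regardless.)
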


Next we consider the $\cl$-root leaves of GS, DFS, and MCS. They are exactly those vertices of the graph that are not cut vertices. The same even holds for \cf-branch leaves and \cl-branch leaves of GS. 

\begin{theorem}\label{thm:l-root-leaf-all}
Let $G$ be a connected graph with $n(G) \geq 2$. The following conditions are equivalent for a vertex $v \in V(G)$.

\begin{enumerate}[(i)]
  \item Vertex $v$ is the $\cl$-root leaf of some DFS ordering of $G$ starting in $v$.\label{item:root-leaf-dfs-some}
  \item Vertex $v$ is the $\cl$-root leaf of every DFS ordering of $G$ starting in $v$.\label{item:root-leaf-dfs-all}
  \item Vertex $v$ is the $\cl$-root leaf of some MCS ordering of $G$.\label{item:root-leaf-mcs}
  \item Vertex $v$ is the $\cl$-root leaf of some GS ordering of $G$.\label{item:root-leaf-generic-lroot}
  \item Vertex $v$ is an $\cl$-branch leaf of some GS ordering of $G$.\label{item:root-leaf-generic-lbranch}
  \item Vertex $v$ is an $\cf$-branch leaf of some GS ordering of $G$.\label{item:root-leaf-generic-fbranch}
  \item Vertex $v$ is the end-vertex of some GS ordering of $G$.\label{item:root-leaf-generic-end}
  \item Vertex $v$ is not a cut vertex of $G$.\label{item:root-leaf-generic-cut}
\end{enumerate}
\end{theorem}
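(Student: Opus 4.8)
The plan is to show that all eight conditions are equivalent to \pef{item:root-leaf-generic-cut}. Concretely, I will prove that each of \pef{item:root-leaf-dfs-some}--\pef{item:root-leaf-generic-end} implies \pef{item:root-leaf-generic-cut}, and that \pef{item:root-leaf-generic-cut} implies \pef{item:root-leaf-dfs-all}, \pef{item:root-leaf-mcs}, \pef{item:root-leaf-generic-fbranch} and \pef{item:root-leaf-generic-end}; the remaining links \pef{item:root-leaf-dfs-all}$\Rightarrow$\pef{item:root-leaf-dfs-some}, \pef{item:root-leaf-mcs}$\Rightarrow$\pef{item:root-leaf-generic-lroot} and \pef{item:root-leaf-generic-end}$\Rightarrow$\pef{item:root-leaf-generic-lbranch} are immediate, since DFS and MCS orderings are GS orderings and since the end-vertex of a GS ordering is a non-root leaf of its $\cl$-tree.

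Every implication toward \pef{item:root-leaf-generic-cut} rests on one elementary fact: for any spanning tree $T$ of the connected graph $G$, the degree $\deg_T(v)$ is at least the number of connected components of $G-v$. Indeed, $T-v$ is a forest with exactly $\deg_T(v)$ subtrees, each connected in $T\subseteq G$ and hence contained in a single component of $G-v$, and together they cover all of $G-v$. Consequently, if $v$ is a leaf of the $\cf$-tree or the $\cl$-tree of some GS ordering — which covers the DFS, MCS and GS orderings appearing in \pef{item:root-leaf-dfs-some}--\pef{item:root-leaf-generic-fbranch} — then $\deg_T(v)=1$, so $G-v$ (nonempty as $n(G)\ge 2$) is connected, i.e.\ $v$ is not a cut vertex. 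For \pef{item:root-leaf-generic-end} the tree is not even needed: if $v$ is the last vertex of a GS ordering, the prefix induces $G-v$, which is connected because in a GS ordering every vertex but the first has an earlier neighbor.

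For \pef{item:root-leaf-generic-cut}$\Rightarrow$\pef{item:root-leaf-dfs-all}, note that $G-v$ is connected, so in any DFS ordering $\sigma$ starting in $v$ with $\cl$-tree $T$, the vertex $v$ has at least one child; if it had two children $a,b$, then by \cref{lemma:dfstrees} every $G$-edge leaving the subtree of $T$ rooted at $a$ ends at an ancestor of $a$, hence at $v$, and likewise for $b$, so deleting $v$ separates these two nonempty subtrees, contradicting that $v$ is no cut vertex; thus $v$ is the $\cl$-root leaf of $\sigma$. For \pef{item:root-leaf-generic-end} and \pef{item:root-leaf-generic-fbranch}, take any GS ordering $\tau$ of $G-v$ and set $\sigma:=\tau\mdoubleplus v$; since all neighbors of $v$ lie in $G-v$, this is a GS ordering of $G$ with end-vertex $v$, and being last, $v$ has a parent but no child in either the $\cf$-tree or the $\cl$-tree, so $v$ is an $\cf$-branch leaf (and an $\cl$-branch leaf).

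The delicate case, which I expect to be the main obstacle, is \pef{item:root-leaf-generic-cut}$\Rightarrow$\pef{item:root-leaf-mcs}. The naive candidate $v\mdoubleplus\tau$ with $\tau$ an arbitrary MCS ordering of $G-v$ fails, because inside $G-v$ the search may pick a non-neighbor of $v$ while a tied neighbor of $v$ is available, and in $G$ that neighbor would then carry a strictly larger label. The remedy is to let $\tau$ be the MCS\+ ordering of $G-v$ under a tie-breaking order $\rho$ that lists all neighbors of $v$ first (so $\tau$ starts at a neighbor $u$ of $v$): whenever $\tau$ picks a non-neighbor $x$ of $v$, every still-unnumbered neighbor of $v$ must have a strictly smaller $G-v$-label than $x$ — otherwise, preceding $x$ in $\rho$, it would be chosen first — and this gap is exactly absorbed by the extra label element such a vertex gains from $v$ in $G$; a short case distinction then confirms that $\sigma:=v\mdoubleplus\tau$ is a genuine MCS ordering of $G$. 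In $\sigma$, the vertex $u$ is $v$'s only $\cl$-child, because every later vertex already has an earlier neighbor within the connected graph $G-v$, and hence $v$ is the $\cl$-root leaf of $\sigma$. Apart from getting this tie-break right and controlling label sizes in $G$ against those in $G-v$, the remaining directions only use the degree bound above and \cref{lemma:dfstrees}.
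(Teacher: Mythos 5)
Your proposal is correct, and although it funnels all eight statements through the cut-vertex condition \ref{item:root-leaf-generic-cut} just as the paper does (your DFS direction via \cref{lemma:dfstrees} is essentially the paper's contraposition argument), two parts take a genuinely different route. For the implications toward \ref{item:root-leaf-generic-cut} you use one uniform observation -- a leaf of any spanning tree of $G$ forces $G-v$ to be connected, since the $\deg_T(v)$ subtrees of $T-v$ cover $G-v$ -- which handles \ref{item:root-leaf-dfs-some}--\ref{item:root-leaf-generic-fbranch} at once, and you prove the GS end-vertex characterization directly from prefix connectivity instead of citing Charbit et al.; the paper argues separately with two components $A,B$ of $G-v$ and the placement of $v$'s parent and children in the $\cf$- and $\cl$-trees. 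More substantially, for \ref{item:root-leaf-generic-cut}$\Rightarrow$\ref{item:root-leaf-mcs} the paper runs MCS\+ on $G$ itself with $v$ forced first and the \emph{non}-neighbors of $v$ preferred in ties, then refutes a hypothetical second child $y$ of $v$ by following an $x$-$y$-path in $G-v$ to a tied vertex that the tie-break would have chosen earlier; you instead run MCS\+ on $G-v$ with the neighbors of $v$ preferred, and verify by a label-cardinality comparison that prepending $v$ yields an MCS ordering of $G$ -- the one delicate point being the strict inequality when a non-neighbor of $v$ is selected, which your tie-break supplies exactly as needed, after which $v$ has the single child $u$ because every later vertex has an earlier neighbor inside the connected graph $G-v$. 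Your variant is more local and self-contained, mirrors the easy GS construction (the paper's proof of \ref{item:root-leaf-generic-cut}$\Rightarrow$\ref{item:root-leaf-generic-lroot}), and avoids the path argument; the paper's version stays inside $G$ and matches the MCS\+ contradiction technique it uses elsewhere. Both are sound.
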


\begin{proof}
First we show the equivalence of all statements concerning GS and statement \ref{item:root-leaf-generic-cut}. Charbit et al.~\cite{charbit2014influence} showed that a vertex $v$ is an end-vertex of a GS ordering if and only if it is not a cut vertex of $G$. Thus, \ref{item:root-leaf-generic-end} $\Leftrightarrow$ \ref{item:root-leaf-generic-cut}. 
Clearly, the end-vertex of a GS ordering is also a branch leaf of its $\cf$-tree and its $\cl$-tree. Hence, \ref{item:root-leaf-generic-end} $\Rightarrow$ \ref{item:root-leaf-generic-lbranch}, \ref{item:root-leaf-generic-fbranch}.

Now assume that vertex $v$ is not a cut vertex of $G$. Then let $\sigma$ be a GS ordering of $G - v$ that starts in a neighbor $w$ of $v$. Obviously, $\sigma' = v \mdoubleplus \sigma$ is a GS ordering of $G$. As $G - v$ is connected, every vertex of $G-v$ accept from $w$ has a neighbor to its left in $\sigma$. Thus, the only neighbor of $v$ in the $\cl$-tree of $\sigma'$ is $w$ and $v$ is a GS \cl-root leaf of $G$. Hence, \ref{item:root-leaf-generic-cut} $\Rightarrow$ \ref{item:root-leaf-generic-lroot}.

Assume that $v$ is a cut vertex of $G$ and let $A$ and $B$ be two components of $G - v$.  If $v$ is the start vertex of the GS ordering $\sigma$ of $G$, then the leftmost vertices of $A$ and $B$ in $\sigma$ are neighbors of $v$ in the $\cl$-tree of $\sigma$. Therefore, $v$ is not the $\cl$-root leaf of $\sigma$ and \ref{item:root-leaf-generic-lroot} $\Rightarrow$ \ref{item:root-leaf-generic-cut}.

If the GS ordering $\sigma$ does not start with the cut vertex $v$, then w.l.o.g. we may assume that  $\sigma$ starts with a vertex of $A$. Then the parent of $v$ in the \cf-tree and in the \cl-tree of $\sigma$ is an element of $A$. Furthermore, the first vertex of $B$ in $\sigma$ is a child of $v$ in the $\cf$-tree and in the $\cl$-tree of $\sigma$. Therefore, $v$ is neither an $\cf$-branch leaf nor an $\cl$-branch leaf of a GS ordering of $G$. Hence, \ref{item:root-leaf-generic-lbranch} $\Rightarrow$ \ref{item:root-leaf-generic-cut} and \ref{item:root-leaf-generic-fbranch} $\Rightarrow$ \ref{item:root-leaf-generic-cut}.

As DFS orderings and MCS orderings are also GS orderings, the statements \ref{item:root-leaf-dfs-some} and \ref{item:root-leaf-mcs} directly imply \ref{item:root-leaf-generic-lroot} and, thus, they also imply \ref{item:root-leaf-generic-cut}.

To show that \ref{item:root-leaf-generic-cut} implies \ref{item:root-leaf-mcs}, assume that $v$ is not a cut vertex. Then let $\rho$ be a vertex ordering of $G$ that starts with $v$ and has all neighbors of $v$ to the right of all non-neighbors of $v$. Let $\sigma$ be the MCS\+$(\rho)$ ordering of $G$. Assume for contradiction that $v$ has more than one child in the \cl-tree $T$ of $\sigma$. Let $x$ be the leftmost child of $v$ in $\sigma$ and let $y$ be another child of $v$ in $T$. Since $T$ is an \cl-tree and $v$ is the first vertex of $\sigma$, both $x$ and $y$ have only one neighbor to its left in $\sigma$, namely $v$. As $v$ is not a cut vertex, there is an $x$-$y$-path $P$ in $G - v$. Let $z$ be the vertex nearest to $x$ on $P$ that is to the right of $y$ in $\sigma$. Vertex $z$ exists since $v$ is the only neighbor of $y$ to the left of $y$ in $\sigma$ and, thus, the neighbor of $y$ in $P$ is not to the left of $y$ in $\sigma$. Vertex $z$ can only have one neighbor to the left of $y$ since, otherwise, $\sigma$ would not be an MCS ordering. By the choice of $z$, this neighbor lies on $P$. Thus, vertex $z$ is not a neighbor of $v$ and $z$ is to the left of $y$ in $\rho$. This is a contradiction as $y$ and $z$ have both exactly one neighbor to the left of $y$ and, thus, MCS\+$(\rho)$ would have numbered $z$ instead of $y$. Hence, $v$ is an \cl-root leaf of the MCS ordering $\sigma$.

Statement \ref{item:root-leaf-dfs-all} trivially implies \ref{item:root-leaf-dfs-some}. It remains to show that \ref{item:root-leaf-generic-cut} implies \ref{item:root-leaf-dfs-all}. We show the contraposition. Let $\sigma$ be a DFS ordering starting with $v$ and let $T$ be the \cl-tree of $\sigma$. Assume $v$ has two children in $T$. Any path in $G$ between these two children runs through $v$, due to \cref{lemma:dfstrees}. Thus, $v$ is a cut vertex of $G$.
\end{proof}

Note that DFS differs from GS and MCS in this result. While for the latter three searches it is possible that a vertex is not the $\cl$-root leaf of a search ordering starting with that vertex, this is not possible for DFS.

Since DFS, LDFS, MCS, and MNS orderings are also GS orderings, \cref{thm:l-root-leaf-all} directly implies that we can characterize the $\cl$-leaves of these orderings.

\begin{theorem}\label{corol:l-leaves}
For any search $\A \in \{\text{GS, DFS, LDFS, MCS, MNS}\}$ and any vertex $v$ of a connected graph $G$ with $n(G) \geq 2$, the following statements are equivalent. 

\begin{enumerate}[(i)]
\item Vertex $v$ is the $\cl$-root leaf of some $\A$-ordering of $G$.
\item Vertex $v$ is an $\cl$-leaf of some $\A$-ordering of $G$.
\item Vertex $v$ is not a cut vertex of $G$.
\end{enumerate}
\end{theorem}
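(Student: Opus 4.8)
The plan is to derive \cref{corol:l-leaves} as an almost immediate consequence of \cref{thm:l-root-leaf-all}, using the fact that every ordering produced by any of the searches GS, DFS, LDFS, MCS, MNS is in particular a GS ordering. I would prove the three implications $(i) \Rightarrow (ii) \Rightarrow (iii) \Rightarrow (i)$ in a cycle, for each fixed $\A$ in the list.

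First, $(i) \Rightarrow (ii)$ is trivial: an $\cl$-root leaf is by definition a leaf of the $\cl$-tree, hence an $\cl$-leaf. Second, for $(ii) \Rightarrow (iii)$ I would argue the contrapositive. Suppose $v$ is a cut vertex of $G$ and let $\sigma$ be any $\A$-ordering of $G$; since $\A \in \{\text{GS, DFS, LDFS, MCS, MNS}\}$, $\sigma$ is a GS ordering. If $\sigma$ starts with $v$, then by the implication $\ref{item:root-leaf-generic-lroot} \Rightarrow \ref{item:root-leaf-generic-cut}$ shown inside the proof of \cref{thm:l-root-leaf-all} (two distinct components of $G-v$ each contribute a child of $v$ in the $\cl$-tree), $v$ is not the $\cl$-root leaf; and if $\sigma$ does not start with $v$, then by the implication $\ref{item:root-leaf-generic-lbranch} \Rightarrow \ref{item:root-leaf-generic-cut}$ from that same proof, $v$ is not an $\cl$-branch leaf. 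In either case $v$ is not an $\cl$-leaf of $\sigma$, and since $\sigma$ was arbitrary, $(ii)$ fails. Finally, $(iii) \Rightarrow (i)$: if $v$ is not a cut vertex, then by \cref{thm:l-root-leaf-all} (specifically $\ref{item:root-leaf-generic-cut} \Rightarrow \ref{item:root-leaf-dfs-some}$ and $\ref{item:root-leaf-generic-cut} \Rightarrow \ref{item:root-leaf-mcs}$, together with the trivial containments) there is a DFS ordering and an MCS ordering of $G$ in which $v$ is the $\cl$-root leaf; since every DFS ordering is an LDFS-free GS ordering — more to the point, I would instead observe that it suffices to exhibit, for each $\A$, one $\A$-ordering with $v$ as $\cl$-root leaf. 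For GS this is $\ref{item:root-leaf-generic-cut} \Rightarrow \ref{item:root-leaf-generic-lroot}$; for DFS it is $\ref{item:root-leaf-generic-cut} \Rightarrow \ref{item:root-leaf-dfs-some}$; for MCS it is $\ref{item:root-leaf-generic-cut} \Rightarrow \ref{item:root-leaf-mcs}$. For LDFS and MNS I would note that the DFS ordering constructed in the proof of \cref{thm:l-root-leaf-all} starts with $v$ and $v \mdoubleplus \sigma$ construction works for any search whose restriction to $G-v$ still produces a valid ordering — but cleanest is: run $\A$ on $G$ forcing $v$ first (the first vertex is always eligible since all labels are empty), and the same path argument as in the GS case shows $v$ has a unique child. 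So for each $\A$ one gets an $\A$-ordering starting in $v$ whose $\cl$-tree has $v$ as a leaf, giving $(i)$.

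The only genuine subtlety — and the step I would be most careful about — is whether, for LDFS and MNS specifically, forcing $v$ to be visited first is compatible with the search rules and whether the resulting $\cl$-tree really has $v$ as a leaf. For the "$v$ first" part there is nothing to check: at step $1$ every vertex has empty label, so no vertex is $\prec_\A$-dominated and $v$ is eligible. For the "leaf" part, note that in the $\cl$-tree of any GS ordering starting with $v$, a child of $v$ is a vertex whose \emph{rightmost} earlier neighbor is $v$, i.e.\ a vertex $x$ with $N(x)$ entirely to the right of $x$ except for $v$; if $v$ is not a cut vertex one can, as in the GS argument, use a path in $G-v$ between two such children to contradict minimality/search rules, or simply invoke that the GS-level statements $\ref{item:root-leaf-generic-lroot}$ and $\ref{item:root-leaf-generic-cut}$ are equivalent and that an LDFS/MNS ordering starting in $v$ is a GS ordering starting in $v$ — but that equivalence only tells us \emph{some} GS ordering works, not the LDFS/MNS one. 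Hence the honest route is to reuse the MCS-style argument from the proof of \cref{thm:l-root-leaf-all}: take $\sigma$ to be the $\A\+(\rho)$-ordering where $\rho$ lists $v$ first and all non-neighbors of $v$ before all neighbors of $v$; the identical contradiction (a vertex $z$ off $N(v)$ that the $+$-rule should have preferred) goes through for LDFS and MNS because these searches, like MCS, would have been free to pick $z$ at that step. I would present $(iii)\Rightarrow(i)$ this way, noting that it uniformly covers GS, DFS, LDFS, MCS, and MNS, and then $(i)\Rightarrow(ii)$ and $(ii)\Rightarrow(iii)$ close the cycle.

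Altogether the proof is short; its content is entirely in correctly bookkeeping which internal implications of \cref{thm:l-root-leaf-all} transfer, and in handling LDFS and MNS — which are \emph{not} named in \cref{thm:l-root-leaf-all} — by re-running the $+$-search argument rather than by a black-box appeal.
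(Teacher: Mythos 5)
Your cycle $(i)\Rightarrow(ii)\Rightarrow(iii)\Rightarrow(i)$ and the use of the GS containment for $(ii)\Rightarrow(iii)$ match the paper, which indeed derives this theorem from \cref{thm:l-root-leaf-all} with no separate argument. Where you diverge is $(iii)\Rightarrow(i)$ for LDFS and MNS, and there your argument has a concrete gap in the MNS case. In the MCS part of the proof of \cref{thm:l-root-leaf-all}, the sentence ``$z$ can only have one neighbor to the left of $y$'' rests on cardinality: $y$ is numbered with a label of size one, so at that moment no unnumbered vertex can have two numbered neighbors, and in particular $z$ is itself eligible and tied with $y$, so the $+$-rule must prefer it. Under $\prec_{MNS}$ this does not transfer: the eligibility of $y$ with label $\{1\}$ only excludes unnumbered vertices whose label \emph{strictly contains} $\{1\}$, i.e., neighbors of $v$ with a second numbered neighbor. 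A non-neighbor $z$ of $v$ may have many numbered neighbors, and its label may be strictly contained in the label of some other unnumbered vertex, in which case $z$ is \emph{not} eligible and the $\A\+(\rho)$-rule is under no obligation to pick it before $y$. So ``the identical contradiction goes through because these searches would have been free to pick $z$ at that step'' is precisely the unproved claim for MNS. It is repairable: replace $z$ by an unnumbered vertex $z^*$ whose label is inclusion-maximal among the labels of unnumbered vertices containing the label of $z$; then $z^*$ is eligible, and $1$ cannot lie in its label (otherwise it would strictly contain $\{1\}$, contradicting $y$'s eligibility), so $z^*$ is a non-neighbor of $v$, precedes $y$ in $\rho$, and yields the contradiction --- but this extra step must be made. (For LDFS your transcription is sound, in fact for a stronger reason: once the predecessor of $z$ on $P$ is numbered, the label $\{1\}$ of $y$ is $\prec_{LDFS}$-dominated by the label of $z$, so $y$ itself is not eligible; no appeal to $\rho$ is needed, which reflects that \emph{every} LDFS ordering starting in $v$ works.)

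The detour is also unnecessary, and avoiding it is how the paper can state the theorem as an immediate consequence. Every LDFS ordering is a DFS ordering (this follows directly from the label orders, in the same way as the containments stated in the preliminaries), so the statement that $v$ is the $\cl$-root leaf of \emph{every} DFS ordering starting in $v$ (statement \ref{item:root-leaf-dfs-all} of \cref{thm:l-root-leaf-all}) applies verbatim to any LDFS ordering started in $v$; and every MCS ordering is an MNS ordering, so the MCS ordering provided by statement \ref{item:root-leaf-mcs} already witnesses $(i)$ for MNS. With these two containments the black-box appeal you rejected does cover LDFS and MNS; if you insist on re-running the $+$-search argument instead, the MNS case needs the patch described above.
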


As we can check in linear time whether a vertex is a cut vertex, we can also recognize $\cl$-leaves of GS, DFS, LDFS, MCS, and MNS within this time bound. However, we will see in \cref{corol:l-branch-leaves-dfs} that at least for DFS the recognition of \cl-branch leaves is \NP-complete.

The characterization of \cl-root leaves given in \cref{thm:l-root-leaf-all} does not work for BFS as the following theorem shows.

\begin{theorem}\label{thm:root-leaf-bfs}
Let $G$ be a connected graph with $n(G) \geq 2$. A vertex $v \in V(G)$ is the $\cl$-root leaf of some BFS ordering of $G$ if and only if $G[N_G(v)]$ is connected.
\end{theorem}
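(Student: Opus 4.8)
The plan is to analyze the $\cl$-tree $T$ of a BFS ordering $\sigma$ that starts in $v$. Write $k = |N_G(v)|$; since $G$ is connected with $n(G) \geq 2$ we have $k \geq 1$. Because BFS is layered and $\sigma$ starts in $v$, the block $N_G(v) = N_G^1(v)$ occupies exactly the positions $2, \dots, k+1$ of $\sigma$; denote these vertices by $w_1, \dots, w_k$ in the order in which they appear. Two observations drive everything. First, a vertex outside $N_G[v]$ is non-adjacent to $v$ and hence cannot have $v$ as its parent in $T$, so the children of $v$ in $T$ form a subset of $N_G(v)$. Second, $w_1 = \sigma(2)$ has only $v$ to its left in $\sigma$ and so is necessarily a child of $v$; and for $i \geq 2$ the vertices to the left of $w_i$ are exactly $v, w_1, \dots, w_{i-1}$, so the parent of $w_i$ in $T$ equals $v$ if and only if $w_i$ has no neighbor among $w_1, \dots, w_{i-1}$. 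Since $v$ is the root of $T$, it is a leaf of $T$ precisely when it has exactly one child, i.e. when $w_1$ is its only child. Combining the two observations, $v$ is the $\cl$-root leaf of $\sigma$ if and only if every $w_i$ with $i \in \{2, \dots, k\}$ has a neighbor in $\{w_1, \dots, w_{i-1}\}$ --- equivalently, $w_1, \dots, w_k$ is a \emph{connected ordering} of the induced subgraph $G[N_G(v)]$.

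It then remains to link such orderings with connectivity. I will use the standard fact that a graph $H$ on vertex set $W$ admits an ordering $w_1, \dots, w_k$ of $W$ in which each $w_i$ ($i \geq 2$) has an earlier neighbor if and only if $H$ is connected (one direction by induction, since $H[\{w_1, \dots, w_i\}]$ stays connected; the other by running any search on $H$). For the forward direction of the theorem this immediately yields that $G[N_G(v)]$ is connected. For the backward direction I additionally need that, given an arbitrary ordering $w_1, \dots, w_k$ of $N_G(v)$, there really exists a BFS ordering of $G$ starting $v, w_1, \dots, w_k$. This holds because, once $v$ is numbered, every unnumbered vertex of $N_G(v)$ has a label whose minimum is $1$, whereas every vertex outside $N_G(v)$ has so far received only numbers at least $2$ into its label (its only numbered neighbors, if any, were numbered after $v$); hence at every step before $N_G(v)$ is exhausted \emph{all} remaining vertices of $N_G(v)$ are eligible for BFS, so we may number them in the prescribed order $w_1, \dots, w_k$ and then complete the search arbitrarily (BFS always produces a full ordering of a connected graph). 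Feeding this construction a connected ordering of $G[N_G(v)]$ makes $v$, by the first paragraph, the $\cl$-root leaf of the resulting BFS ordering.

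Assembling the pieces proves both implications. The main obstacle --- and essentially the only place where BFS rather than mere generic search is used --- is the realizability claim: that an arbitrary permutation of $N_G(v)$ can be the order in which BFS visits the first layer. Everything else is bookkeeping about the $\cl$-tree of a search ordering whose first vertex is $v$, together with the elementary connected-ordering characterization of connectivity.
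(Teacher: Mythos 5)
Your proof is correct and takes essentially the same route as the paper's: the paper realizes a generic-search (i.e., connected) ordering of $G[N_G(v)]$ immediately after $v$, so that only its first vertex becomes a child of $v$, and conversely notes that the leftmost vertex of each component of a disconnected $G[N_G(v)]$ must be a child of $v$ in any BFS ordering starting at $v$. Your explicit verification, via the label framework, that any permutation of $N_G(v)$ can occur as the first BFS layer is a detail the paper simply asserts.
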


\begin{proof}
First assume that $G[N_G(v)]$ is connected. Let $\sigma$ be a GS ordering of $G[N_G(v)]$. The ordering $v \mdoubleplus \sigma$ is a prefix of a BFS ordering of $G$ and all vertices of $N_G(v)$ except from the first in $\sigma$ have a neighbor to its left in $\sigma$. Thus, $v$ has only one child in the \cl-tree of every BFS ordering starting with $v \mdoubleplus \sigma$.

Now assume $G[N_G(v)]$ is not connected. Let $A$ and $B$ be two components of $G[N_G(v)]$ and let $\sigma$ be an arbitrary BFS ordering of $G$ starting with $v$. The leftmost vertices of $A$ and $B$ in $\sigma$ are neighbors of $v$ and, thus, they are children of $v$ in the \cl-tree of $\sigma$. Therefore, $v$ is not an \cl-root leaf of $\sigma$.
\end{proof}

\section{\NP-Hardness of Branch Leaf Recognition}

\subsection{Branch Leaves of DFS}
DFS \cl-trees can be recognized in linear time~\cite{hagerup1985biconnected,korach1989dfs}. As we have seen in \cref{corol:l-leaves}, this also holds for DFS $\cl$-leaves. In contrast, recognizing DFS \cl-branch leaves of a graph is as hard as the recognition of DFS end-vertices since the two concepts are equivalent.

\begin{theorem}\label{thm:end-leaf-dfs}
A vertex $v \in V(G)$ of a graph $G$ is an $\cl$-branch leaf of some DFS ordering of $G$ if and only if $v$ is the end-vertex of some DFS ordering of $G$.
\end{theorem}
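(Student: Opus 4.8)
The plan is to prove both directions, with the interesting direction being that every $\cl$-branch leaf of a DFS ordering is also a DFS end-vertex. The easy direction is immediate: if $v$ is the end-vertex of some DFS ordering $\sigma$, then $v$ is last in $\sigma$, so $v$ has no neighbor to its right and hence no child in the $\cl$-tree of $\sigma$; since $n(G) \geq 2$ and $v$ is not the first vertex of $\sigma$, vertex $v$ is a leaf of the $\cl$-tree that is not the root, i.e., an $\cl$-branch leaf.

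For the forward direction, suppose $v$ is an $\cl$-branch leaf of some DFS ordering $\sigma$, and let $T$ be the $\cl$-tree of $\sigma$. Since $v$ is not the root of $T$, it has a parent $p$ in $T$, and since $v$ is a leaf it has no children. The key structural fact I would invoke is \cref{lemma:dfstrees}: for every edge $uw \in E(G)$, one of $u, w$ is an ancestor of the other in $T$. Applying this to every edge incident to $v$: each neighbor of $v$ in $G$ is either an ancestor of $v$ or a descendant of $v$ in $T$; but $v$ has no children, hence no proper descendants, so every neighbor of $v$ lies on the path from $v$ to the root $s$ of $T$. Thus $v$'s neighborhood is totally ordered along a root-to-$v$ path in $T$.

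Now I would build a new DFS ordering in which $v$ comes last. The idea is to modify $T$ slightly and then apply \cref{lemma:tree}. Let $r$ be the neighbor of $v$ in $G$ that is farthest from the root $s$ along the root-to-$v$ path (equivalently, the deepest ancestor of $v$ in $T$ that is adjacent to $v$); note $r = p$ since $p$ is $v$'s parent and $pv \in E(G)$, so in fact $r = p$. The point is that $v$ is a pendant vertex of $T$ attached at $p$, and \emph{every} neighbor of $v$ in $G$ is an ancestor of $p$ or equal to $p$. So if I take any DFS ordering $\rho$ of $T - v$ starting at $s$ and append $v$ at the end, I claim $\rho \mdoubleplus v$ is a DFS ordering of $G$: first, $\rho$ is a DFS ordering of $T - v$, hence of $G - v$ as well — wait, that needs care, so instead I would argue directly that $T$ itself is already an $\cl$-tree of a DFS ordering of $G$ (it is, namely of $\sigma$), and then use \cref{lemma:tree} to get a DFS ordering $\tau$ of $G$ with $\cl$-tree $T$ obtained from \emph{any} DFS ordering of $T$ starting at $s$. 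Choosing the DFS ordering of the tree $T$ that visits the pendant vertex $v$ last (possible because $v$ is a leaf of $T$, so a DFS of $T$ can defer the one-vertex subtree $\{v\}$ until all other branches at $p$ and all other vertices are exhausted — more precisely, order the children of each node so that the branch containing $v$ is explored last at every ancestor of $v$), we get a DFS ordering $\tau$ of $G$ in which $v$ is the last vertex. Hence $v$ is a DFS end-vertex of $G$.

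The main obstacle is the last step: justifying that a DFS traversal of the tree $T$ can be forced to end at the prescribed leaf $v$. This is where the "arbitrary tie-breaking in line~\ref{line:ls}" of the search framework does the work — at each ancestor of $v$ we are free to pick the subtree containing $v$ last among the eligible children — but one must check that deferring $\{v\}$ never conflicts with the DFS rule on $T$, which holds because $v$ is a leaf and its removal disconnects nothing else. I would also double-check the degenerate case where $v$'s only $T$-neighbor is the root (so the root-to-$v$ path is just $s, v$): then $v$ is pendant at $s$ and $G - v$ together with the $\cl$-tree structure still lets \cref{lemma:tree} apply, and a DFS of $T$ ending at $v$ exists for the same reason.
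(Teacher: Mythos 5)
Your proposal is correct and follows essentially the same route as the paper: both take the $\cl$-tree $T$ of the given ordering, run a DFS on $T$ from its root that defers the branch containing $v$ at every ancestor, and invoke \cref{lemma:tree} to conclude this tree ordering is a DFS ordering of $G$ ending in $v$. The only cosmetic difference is that the paper verifies "$v$ is last" by a short contradiction via the deepest common ancestor, while you argue it directly (your appeal to \cref{lemma:dfstrees} is a harmless detour that the argument does not actually need).
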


\begin{proof}
If $v$ is the end-vertex of some DFS ordering $\sigma$ of $G$, then it is also a branch leaf of the $\cl$-tree of $\sigma$. 

For the other direction, assume that $v$ is a branch leaf of the $\cl$-tree $T$ of the DFS ordering $\sigma$ of $G$ starting with the vertex $r$. Consider a DFS ordering $\tau$ of $T$ starting with $r$ with the following constraint. Whenever a vertex has more than one unnumbered child, then the children that are not ancestors of $v$ are to the left of $v$ in $\sigma$. This means that the DFS always numbers non-ancestors of $v$ before ancestors of $v$ if this is possible. Due to \cref{lemma:tree}, $\sigma$ is also a DFS ordering of $G$.

We claim that $v$ is the end-vertex of $\sigma$. Assume for contradiction that $w \neq v$ is the end-vertex of $\sigma$. Then, $w$ is a branch leaf of $T$. Therefore, $w$ is not an ancestor of $v$. Let $x$ be the common ancestor of $v$ and $w$ that has the largest distance to the root vertex $r$ in $T$. Let $v'$ be the child of $x$ that is an ancestor of $v$ and let $w'$ be the child of $x$ that is an ancestor of $w$ ($v'$ could be equal to $v$ and $w'$ could be equal to $w$). By construction of $\sigma$, the vertex $w'$ is to the left of $v'$ in $\sigma$. Since the descendants of $w'$ appear consecutively directly after $w'$ in $\sigma$, $w$ is to the left of $v$ in $\sigma$; a contradiction.
\end{proof}

Charbit et al.~{\cite{charbit2014influence}} gave sufficient conditions on a graph class $\G$ such that the end-vertex problem of DFS is $\NP$-complete on $\G$. Due to \cref{thm:end-leaf-dfs}, we can replace the term end-vertex in their result by the term \cl-branch leaf.

\begin{corollary}\label{corol:l-branch-leaves-dfs}
Let $\G$ be a graph class that is closed under the insertion of universal vertices. If the Hamiltonian path problem is \NP-complete on $\G$, then the problem of deciding whether a vertex of a graph $G \in \G$ is an \cl-branch leaf of some DFS ordering of $G$ is \NP-complete. In particular, the problem is \NP-complete on split graphs.
\end{corollary}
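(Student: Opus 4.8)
The plan is to derive \Cref{corol:l-branch-leaves-dfs} as an immediate consequence of \Cref{thm:end-leaf-dfs} together with the result of Charbit et al.~\cite{charbit2014influence} on the hardness of the DFS end-vertex problem. Recall that \Cref{thm:end-leaf-dfs} states that a vertex $v$ of $G$ is an $\cl$-branch leaf of some DFS ordering of $G$ if and only if $v$ is the end-vertex of some DFS ordering of $G$. Hence, for any graph class $\G$, the $\cl$-branch-leaf recognition problem for DFS on $\G$ and the end-vertex recognition problem for DFS on $\G$ are literally the same decision problem (same yes-instances), so they have the same complexity. Membership in \NP{} is clear: given $G$ and $v$, a certificate is a DFS ordering $\sigma$ of $G$ ending in $v$, which can be verified in polynomial time by checking against the partial label order $\prec_{DFS}$.

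For \NP-hardness, I would invoke the reduction of Charbit et al.~\cite{charbit2014influence}: they show that if $\G$ is closed under the insertion of universal vertices and the Hamiltonian path problem is \NP-complete on $\G$, then the DFS end-vertex problem is \NP-complete on $\G$. The idea behind their reduction is that attaching a universal vertex $u$ to a graph $H$ forces any DFS starting at $u$ to first pick some neighbor and then essentially trace a path; more precisely, a target vertex $v$ can be made a DFS end-vertex of $H$ plus a universal vertex precisely when $H$ admits a Hamiltonian path with a prescribed endpoint, which in turn (after a standard gadget) reduces to the plain Hamiltonian path problem on $\G$. Since $\G$ being closed under universal vertex insertion keeps the constructed instance inside $\G$, the hardness holds on $\G$. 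Combining this with \Cref{thm:end-leaf-dfs} transfers the hardness verbatim to the $\cl$-branch-leaf problem.

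Finally, for the concrete consequence on split graphs, I would note that the class of split graphs is closed under insertion of universal vertices: adding a vertex adjacent to everything to a split graph with partition $(C,I)$ yields a split graph with partition $(C \cup \{u\}, I)$, since $C \cup \{u\}$ remains a clique. It is also known (and used in~\cite{charbit2014influence}) that the Hamiltonian path problem is \NP-complete on split graphs. Applying the general statement with $\G$ the class of split graphs then gives that recognizing $\cl$-branch leaves of DFS orderings is \NP-complete on split graphs.

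There is essentially no mathematical obstacle here: all the work has already been done in \Cref{thm:end-leaf-dfs} and in~\cite{charbit2014influence}, so the "proof" is really just the observation that the two problems coincide plus a one-line verification of the two closure/hardness facts about split graphs. The only point requiring mild care is making sure the equivalence of \Cref{thm:end-leaf-dfs} is quantified over \emph{all} graphs (so that it applies to every graph in $\G$ without restriction), which it is, since its statement is about an arbitrary graph $G$.
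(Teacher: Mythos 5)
Your proposal is correct and matches the paper's argument: the paper also obtains the corollary directly by combining \cref{thm:end-leaf-dfs} with the Charbit et al.\ hardness result for the DFS end-vertex problem, and then notes the split-graph case. The only extra content in your write-up (NP membership and the closure of split graphs under universal vertices) is routine and consistent with what the paper leaves implicit.
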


A similar result can be given for \cf-branch leaves of DFS. By adapting the proof given in~\cite{scheffler2022recognition} that \cf-trees of DFS are hard to recognize, we can show that the same holds for \cf-branch leaves of DFS.

\begin{theorem}\label{thm:f-branch-leaves-dfs}
Let $\G$ be a graph class that is closed under the insertion of universal vertices and leaves. If the Hamiltonian path problem is \NP-complete on $\G$, then the problem of deciding whether a vertex of a graph $G \in \G$ is an \cf-branch leaf of some DFS ordering of $G$ is \NP-complete. In particular, the problem is \NP-complete on chordal graphs.
\end{theorem}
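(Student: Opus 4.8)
The plan is to mimic the reduction from the Hamiltonian path problem that was used in~\cite{scheffler2022recognition} to prove the \NP-hardness of recognizing \cf-trees of DFS, but to route its output through the branch-leaf question instead of the tree question. First I would take a graph $H \in \G$ on which we want to decide whether $H$ has a Hamiltonian path, and build the instance $G$ by the same two gadget operations that Corneil et al.\ and Scheffler use: add a small number of universal vertices and a small number of pendant leaves, staying inside $\G$ by the closure hypothesis. The universal vertices are there to force the first few steps of any DFS ordering into a prescribed shape, so that the remaining part of the ordering is essentially forced to traverse $H$ along a path; the pendant leaf is the distinguished vertex $x$ whose branch-leaf status we ask about, and it should be attached to exactly one vertex $u$ of $G$ so that $x$ is a leaf of the \cf-tree of $\sigma$ if and only if $x$ is not the \cf-parent of any vertex, i.e.\ if and only if $x$ is numbered after all of its (single) neighbor's... no --- more precisely, since $\deg_G(x)=1$, vertex $x$ is always a \cf-leaf \emph{unless} some other vertex has $x$ as its leftmost neighbor, which can only happen if $x$ precedes that vertex; so the condition ``$x$ is an \cf-branch leaf'' becomes ``there is a DFS ordering $\sigma$ in which $x$ is numbered last among $N_G[u]$'' (or similar), which is exactly the kind of ``visited-late'' statement the Hamiltonian-path gadget controls.

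The key steps, in order, are: (1) describe the construction $H \mapsto G$ precisely, naming the universal vertices, the pendant vertex $x$, and the vertex $u$ to which $x$ is attached, and verify $G \in \G$ using the closure assumptions; (2) (soundness) assume $H$ has a Hamiltonian path and explicitly write down a DFS ordering of $G$ --- start at one end of the universal block, run the Hamiltonian path through the copy of $H$, and finish with $x$ --- then check, using the \cl/\cf distinction carefully, that in this ordering $x$ is a leaf of the \cf-tree and is not the root; (3) (completeness) assume $x$ is an \cf-branch leaf of some DFS ordering $\sigma$ of $G$, and argue that the universal vertices plus the position of $x$ force $\sigma$, restricted to $V(H)$, to be a DFS ordering of $H$ that visits $V(H)$ consecutively and in a path-like fashion, so that the \cf-tree restricted to $H$ is a path, yielding a Hamiltonian path of $H$; (4) note that the Hamiltonian path problem is \NP-complete on chordal graphs (indeed on strongly chordal split graphs), and that chordal graphs are closed under adding universal vertices and leaves, to obtain the ``in particular'' clause; (5) observe the problem is in \NP, since a DFS ordering is a polynomial-size certificate checkable in linear time via \cref{lemma:dfstrees} together with reading off the \cf-tree.

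The main obstacle I anticipate is step~(3), the completeness direction, specifically controlling the interaction between the \emph{depth-first} rule and the \emph{leftmost-neighbor} (\cf) rule. In the pure \cf-tree recognition proof of~\cite{scheffler2022recognition} one has full freedom to describe the target tree; here we only get to constrain one leaf, so I must make the universal vertices do more work --- probably I need enough of them, arranged so that after they are all numbered, every not-yet-numbered vertex already has all universals as neighbors to its left, which (a) makes the \cf-parent of every later vertex lie among the universals \emph{unless} that vertex is reached from within $H$ along the current DFS stack, and (b) forces the DFS, once it enters $H$, to stay in $H$ until $H$ is exhausted (because backing out to a universal vertex is blocked by the depth-first constraint once all universals are numbered and on the stack). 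Getting these two forcing arguments exactly right, and in particular ensuring that demanding ``$x$ is a branch leaf'' is equivalent to ``$x$ is numbered last, after a full Hamiltonian traversal of $H$'' rather than merely ``$x$ is numbered somewhere late,'' is the delicate part; I would attach $x$ to a single, carefully chosen vertex (e.g.\ one made adjacent only to one copy-of-$H$ vertex and to no universal vertex) precisely so that $x$ can be a \cf-leaf only when it is numbered at the very end.
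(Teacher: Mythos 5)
There is a genuine gap, and it lies precisely in the step you flagged as delicate but then resolved incorrectly: the choice of the distinguished vertex. If the queried vertex $x$ is a pendant vertex attached to a single vertex $u$, then $x$ is an \cf-branch leaf of \emph{every} DFS ordering that does not start at $x$. Indeed, if $x$ is not the first vertex, then (the graph being connected and the ordering being a graph search ordering) $u \prec_\sigma x$; the only vertex that could have $x$ as its leftmost neighbor is $u$, but the leftmost neighbor of $u$ precedes $u$, hence is not $x$ (and if $u$ is first it has no parent at all). So $x$ has tree degree one and your claimed equivalence ``$x$ is an \cf-branch leaf iff $x$ is numbered last among $N_G[u]$'' is false: the answer to the branch-leaf question for a pendant vertex is trivially yes, and no arrangement of universal vertices can make it encode Hamiltonicity. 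Relatedly, your hoped-for forcing mechanism in step (3) does not exist: universal vertices do not make a DFS ``stay inside $H$ until it is exhausted,'' and nothing in the DFS rule alone forces a Hamiltonian traversal; in any reduction of this kind the forcing must come from the leaf requirement itself.

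The paper's construction inverts your roles. Given $G \in \G$, it attaches one pendant vertex $v'_i$ to every vertex $v_i$ of $G$ and adds a single universal vertex $y$; the queried vertex is $y$, the \emph{universal} vertex. Demanding that $y$ be an \cf-leaf means every vertex placed after $y$ must have a neighbor strictly before $y$; the pendant vertices (whose only neighbors are their attachment point and $y$) then force all original vertices of $G$ to precede $y$ (a short case analysis handles the possibility that some $v'_i$ is the start vertex). Since $y$ is universal and all of $V(G)$ precedes it, \cref{lemma:dfstrees} applied to the \cl-tree of the same ordering shows that all of $V(G)$ are ancestors of $y$, i.e.\ they lie on a single root-to-$y$ path in which consecutive vertices are adjacent and pendant vertices can occur only at the two extremes; this path yields a Hamiltonian path of $G$. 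The converse direction is as in your step (2): start the DFS along a Hamiltonian path of $G$, after which $y$ is never anyone's leftmost neighbor. If you want to salvage your write-up, replace your pendant ``query vertex'' by a universal query vertex and use the pendant attachments exactly to pin the original vertices before it; your steps (1), (2), (4), (5) then go through essentially unchanged.
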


\begin{proof}
Let $G$ be a graph in $\G$ with the vertex set $\{v_1, \ldots, v_n\}$. W.l.o.g. we may assume that $n \geq 2$. We construct the graph $G'$ as follows: First we add the vertex $v'_i$ for all $i \in \{1, \ldots, n\}$ and connect it to $v_i$. Then we add the vertex $y$ which is adjacent to all other vertices. Due to the conditions on the graph class $\G$, the graph $G'$ is in $\G$. We claim that there is a DFS ordering of $G'$ with $\cf$-branch leaf $y$ if and only if there is a Hamiltonian path in $G$. 

If there is an Hamiltonian path in $G$, then there is a DFS ordering of $G'$ that starts with this path. In the \cf-tree of such an ordering, vertex $y$ is a branch leaf. 

Now assume that $y$ is a branch leaf of the \cf-tree $T$ of some DFS ordering $\sigma$ of $G$. Assume for contradiction that $v \in V(G)$ is to the right of $y$ in $\sigma$. As $v'$ is not a child of $y$ in $T$, vertex $v'$ is to the left of $y$ in $\sigma$. Thus, $v'$ is to the left of both its neighbors in $\sigma$ and, thus, $v'$ must be the start vertex of $\sigma$ with $s \neq x_i$ for all $i \in \{1,\ldots,k+3\}$. However, this implies that $y$ is the second vertex of $\sigma$ and, since $n \geq 2$, vertex $y$ has at least one child in $T$; a contradiction. It follows that all vertices $v \in V(G)$ are to the left of $y$ in $\sigma$. Now let $T'$ be the \cl-tree of $\sigma$. Due to \cref{lemma:dfstrees}, all the vertices of $G$ are ancestors of $y$ in $T'$ and, thus, they all lie on a path $P$ from the root of $T'$ to $y$. If some vertex $v'$ lies on this path, then either $v'$ is the first vertex of $P$ or the successor of $v'$ in $\sigma$ is vertex $y$. Thus, only the first and the last vertex of the path $P$ could be one of these vertices. Therefore, $P$ contains a Hamiltonian path of $G$. 

The Hamiltonian path problem is \NP-complete on chordal graphs~\cite{bertossi1986hamiltonian}. Furthermore, chordal graphs are closed under the addition of leaves and universal vertices as neither universal vertices nor leaves can be an element of an induced cycle of length $\geq 4$.
\end{proof}

If we compare \cref{corol:l-branch-leaves-dfs,thm:f-branch-leaves-dfs}, then we see that for $\cl$-branch leaves it is sufficient that the graph class $\G$ is closed under the addition of universal vertices while for $\cf$-branch leaves we have the additional condition that $\G$ is closed under the addition of leaves. We cannot omit this constraint (unless $\P = \NP$) as the \cf-branch leaf recognition problem of DFS can be solved in polynomial time on split graphs (see \cref{corol:algo-f-branch-dfs-split}).

\subsection{Branch Leaves of BFS} The end-vertex problem of BFS is \NP-complete, even if the graph is bipartite and the start vertex of the BFS ordering is fixed~\cite{charbit2014influence}. This fact can be used to show that recognizing BFS \cf-branch leaves is also \NP-complete. 

\begin{theorem}\label{thm:f-branch-leaves-bfs}
It is \NP-complete to decide whether a vertex of a bipartite graph $G$ is an \cf-branch leaf of some BFS ordering of $G$.
\end{theorem}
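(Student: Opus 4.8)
The plan is to reduce from the BFS end-vertex problem on bipartite graphs with a fixed start vertex, which is known to be \NP-complete by Charbit et al.~\cite{charbit2014influence}. Given an instance $(G, s, t)$ asking whether $t$ can be the end-vertex of a BFS ordering of the bipartite graph $G$ starting in $s$, I would build a bipartite graph $G'$ as follows: attach to $t$ a pendant path or a small pendant gadget $P$ (a path of length two, say $t - p_1 - p_2$, to preserve bipartiteness and keep $t$ from being forced to the very end) together with some additional pendant vertices whose only role is to guarantee that in any BFS \cf-tree, the vertex $t$ must appear last among the ``old'' vertices and hence becomes a leaf of the \cf-tree exactly when $t$ is a BFS end-vertex of $G$. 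The designated leaf in $G'$ will be $t$ itself, or a new pendant vertex whose parent is forced to be $t$.

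First I would argue the easy direction: if $t$ is the end-vertex of a BFS ordering $\sigma$ of $G$ starting in $s$, then extending $\sigma$ to $G'$ by visiting the pendant attachments after $t$ gives a BFS ordering of $G'$ in which $t$ is visited before its pendant neighbor(s) but after all of $V(G)$; since in a BFS \cf-tree each vertex is joined to its leftmost (hence, by layering, a lowest-layer) neighbor, the pendant vertices hang off $t$ and $t$ itself, being last among $V(G)$, has no child inside $V(G)$ — so $t$ is a leaf of the \cf-tree, and it is a branch leaf because $s \ne t$. For the converse, I would suppose that the designated vertex is an \cf-branch leaf of some BFS ordering $\sigma$ of $G'$ and use the structure of BFS and of \cf-trees to show that $\sigma$ restricted to $V(G)$ must be a BFS ordering of $G$ ending in $t$: the pendant gadget must be traversed at the end (otherwise the gadget vertices force children onto the designated leaf or onto $t$), the start vertex must be forced to be $s$ (achieved by attaching enough pendant leaves to $s$, or by another standard forcing trick so that any BFS \cf-branch leaf forces $\sigma$ to begin at $s$), and $t$ having no \cf-child among $V(G)$ forces $t$ to be the rightmost vertex of $V(G)$ in $\sigma$, i.e.\ a BFS end-vertex of $G$.

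The main obstacle I anticipate is the bookkeeping needed to simultaneously (a) force the start vertex of the BFS ordering to be $s$ and (b) force all the gadget vertices to the end, while (c) keeping $G'$ bipartite and (d) making sure that the \emph{leftmost-neighbor} rule of \cf-trees genuinely pins $t$ down as a leaf iff $t$ is last in $V(G)$. Layeredness of BFS helps a lot here: pendant leaves attached to $s$ sit in layer $1$ and cannot be leftmost-neighbors of anything but $s$, and a pendant path attached to $t$ lives in high layers once $t$ is visited. The delicate point is ruling out ``degenerate'' BFS orderings of $G'$ that start somewhere inside the gadget or inside $G$ and still make the designated vertex a branch leaf; this is handled by choosing the gadget so that its own vertices all have degree that forces them to be visited only after their unique neighbor in $V(G)$, together with a counting argument on the sizes of the pendant stars. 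I would also need to double-check that adding these gadgets does not accidentally create odd cycles, which is immediate since every added vertex is pendant to a single existing vertex (paths and stars are bipartite and attach consistently). Once these forcing lemmas are in place, the equivalence ``$t$ is a BFS end-vertex of $G$ starting in $s$'' $\Leftrightarrow$ ``the designated vertex is a BFS \cf-branch leaf of $G'$'' follows, and since the reduction is clearly polynomial and membership in \NP is obvious (guess the ordering, verify it is a BFS ordering and that the designated vertex is a leaf of its \cf-tree), the theorem follows.
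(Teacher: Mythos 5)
Your high-level plan (reduce from the fixed-start BFS end-vertex problem on bipartite graphs of Charbit et al.) is the same as the paper's, but the gadget you sketch does not work, and the places where it fails are exactly where the real work lies. First, if you attach a pendant path $t-p_1-p_2$ (or any pendant vertex) to $t$ and designate $t$ as the leaf, then $t$ is the \emph{unique} neighbor of $p_1$, hence $t$ is $p_1$'s parent in every spanning tree and can never be an \cf-leaf of $G'$; and if instead you designate a new pendant vertex as the leaf, the question becomes trivially ``yes'' (a degree-one vertex is a leaf of every spanning tree), so it encodes nothing. The construction needs new neighbors of $t$ of degree at least two, placed at carefully controlled distances so that they are forced to appear \emph{after} $t$ and have an alternative parent available: in the paper these are the vertices $w'$ adjacent to both $t$ and each vertex $w$ of the last layer $N^k_G(r)\setminus\{t\}$, together with the edge from $t$ to the far end $x_{k+3}$ of a long path attached to $r$. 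Second, your claim that ``$t$ having no \cf-child among $V(G)$ forces $t$ to be the rightmost vertex of $V(G)$'' is false in general: being an \cf-leaf only means $t$ is nobody's leftmost neighbor, not that $t$ is last. It is precisely the $w'$-gadget that makes ``leaf'' equivalent to ``rightmost in $V(G)$'': since $t$ may not parent any $w'$, every $w$ in the last layer must precede $t$, and a separate distance argument handles the other layers.

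Third, ``attaching enough pendant leaves to $s$'' does not force the BFS ordering to start at $s$ (or anywhere in particular); pendant leaves at $s$ simply sit in layer $1$ of whatever start vertex is chosen and constrain nothing. There is no standard local trick for pinning the start vertex; the paper instead attaches a path $(r,x_1,\dots,x_{k+3})$ of length $k+3$ (where $k=ecc_G(r)$, with $t\in N^k_G(r)$) and adds the edge $tx_{k+3}$, then argues via distances that if the BFS does not start on this path, then every shortest path to $x_{k+3}$ effectively goes through $t$ or $t$ precedes $x_{k+2}$, making $t$ the parent of $x_{k+3}$ and contradicting leafness. This forces the start to be some $x_i$, which makes $r$ the leftmost vertex of $V(G)$ and makes the restriction of the ordering to $V(G)$ a BFS of $G$ starting at $r$; note the start is \emph{not} forced to be $r$ itself, and the argument is a global distance/layering argument rather than a local pendant-star counting argument. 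Without these three ingredients (non-pendant gadget neighbors of $t$, the long path with the edge to $t$, and the distance analysis), the asserted equivalence does not hold, so the proposal as written has a genuine gap in both directions of the reduction.
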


\begin{proof}
We reduce the beginning-end-vertex problem of BFS on bipartite graphs to the respective \cf-branch leaf recognition problem. Given a connected bipartite graph $G$ and a vertex $r \in V(G)$, the problem asks whether a vertex $v \in V(G) \setminus \{r\}$ can be the end-vertex of some BFS ordering of $G$ starting with $r$. The problem is \NP-complete~\cite{charbit2014influence}.

Let $G$ and $r,v \in V(G)$ be an input of this problem. Let $k$ be the eccentricity of $r$. W.l.o.g. we may assume that $v$ is in $N^k_G(r)$ since, otherwise, $v$ is obviously not the end-vertex of any BFS ordering of $G$ starting with $r$. We construct the graph $G'$ from $G$ as follows (see \cref{fig:bfs-f-np}). We add a path $P = (r, x_1, \ldots, x_{k+3})$ of length $k+3$ to $G$ as well as the edge $vx_{k+3}$. Furthermore, for every vertex $w \in N^k_G(r) \setminus \{v\}$ we add a vertex $w'$ and the edges $vw'$ and $ww'$ to $G$. We collect these vertices $w'$ in the set $L$. Note that $G'$ is also bipartite.

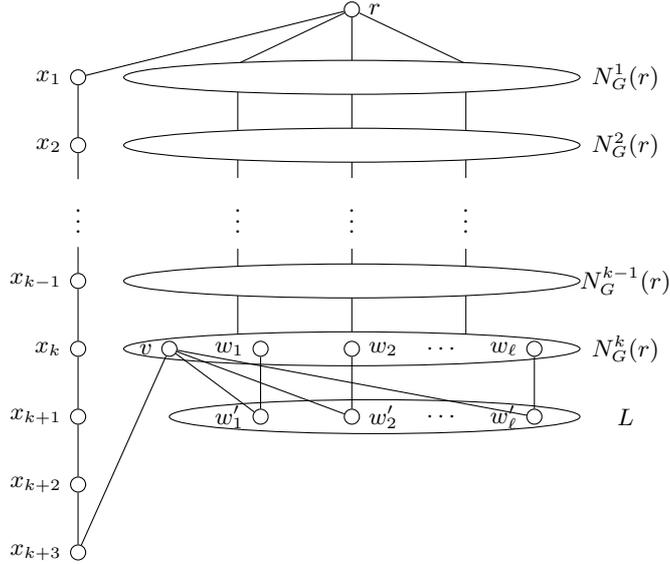
\begin{figure}[t]
\centering
\begin{tikzpicture}[vertex/.style={inner sep=2pt,draw,circle}, path/.style={decoration={snake, amplitude=0.3mm}, decorate}, edge/.style={-}, noedge/.style={dotted},xscale=1.2, yscale=0.9]
\footnotesize
\node[vertex,label={0:$r$}]  (r) at (0,0) {};
\draw (r) -- ++(-1.25, -0.8) --++ (0,-4);
\draw (r) -- ++(0, -0.75) --++ (0,-4);
\draw (r) -- ++(1.25, -0.8) --++ (0,-4);

\draw[fill=white] (0,-1) ellipse (2.5cm and 0.25cm);
\node[]  (d) at (3,-1) {$N_G^1(r)$};
\draw[fill=white] (0,-2) ellipse (2.5cm and 0.25cm);
\node[]  (d) at (3,-2) {$N_G^2(r)$};
\draw[fill=white] (0,-4) ellipse (2.5cm and 0.25cm);
\node[]  (d) at (3,-4) {$N_G^{k-1}(r)$};
\draw[fill=white] (0,-5) ellipse (2.5cm and 0.25cm);
\node[]  (d) at (3,-5) {$N_G^k(r)$};
\draw[fill=white] (0.25,-6) ellipse (2.25cm and 0.25cm);
\node[]  (d) at (3,-6) {$L$};

\node[fill=white,inner sep=2mm]  (d) at (-1.25,-3) {$\vdots$};
\node[fill=white,inner sep=2mm]  (d) at (1.25,-3) {$\vdots$};

\node[vertex,label={180:$x_1$}]  (x1) at (-3,-1) {};
\node[vertex,label={180:$x_2$}]  (x2) at (-3,-2) {};
\node[]  (d) at (-3,-3) {$\vdots$};
\node[fill=white,inner sep=2mm]  (d) at (0,-3) {$\vdots$};
\node[vertex,label={180:$x_{k-1}$}]  (xk-1) at (-3,-4) {};
\node[vertex,label={180:$x_k$}]  (xk) at (-3,-5) {};
\node[vertex,label={180:$x_{k+1}$}]  (xk1) at (-3,-6) {};
\node[vertex,label={180:$x_{k+2}$}]  (xk2) at (-3,-7) {};
\node[vertex,label={180:$x_{k+3}$}]  (xk3) at (-3,-8) {};

\node[vertex,label={180:$v$}]  (v) at (-2,-5) {};
\node[vertex,label={180:$w_1$}]  (w1) at (-1,-5) {};
\node[vertex,label={0:$w_2$}]  (w2) at (0,-5) {};
\node[]  (d) at (1,-5) {$\cdots$};
\node[vertex,label={180:$w_\ell$}]  (wl) at (2,-5) {};
\node[vertex,label={180:$w'_1$}]  (w'1) at (-1,-6) {};
\node[vertex,label={0:$w'_2$}]  (w'2) at (0,-6) {};
\node[]  (d) at (1,-6) {$\cdots$};
\node[vertex,label={180:$w'_\ell$}]  (w'l) at (2,-6) {};

\draw (r) -- (x1) -- (x2) -- ++(0,-0.5);
\draw (xk-1) -- (xk) -- (xk1) -- (xk2) -- (xk3);
\draw (xk-1) -- ++(0,0.5);
\draw (w'1) -- (v) -- (w'2);
\draw (v) -- (w'l);
\draw (w1) -- (w'1);
\draw (w2) -- (w'2);
\draw (wl) -- (w'l);

\draw (v) -- (xk3);

\end{tikzpicture}
\caption{A graphical representation of the graph $G'$ constructed in the proof of \cref{thm:f-branch-leaves-bfs}.}\label{fig:bfs-f-np}
\end{figure}

First assume that $v$ is the end-vertex of some BFS ordering $\sigma$ of $G$ starting with $r$. Let $\rho = (x_1,x_2, \ldots, x_{k+3}) \mdoubleplus \sigma$ and let $\tau$ be the BFS\+($\rho$) ordering of $G'$. By the choice of $\rho$, the ordering $\tau$ starts with $x_1$. Thus, every vertex $x_i$ is in layer $i-1$ and every vertex $v \in N^j_G(r)$ is in layer $j+1$. The choice of $\rho$ also implies that every vertex $x_i$ is the first vertex of its layer in $\tau$. In particular, vertex $x_{k+2}$ is to the left of $v$ in $\tau$ and, thus, $x_{k+2}$ is the parent of $x_{k+3}$ in the \cf-tree of $\tau$. Furthermore, $x_{k+3}$ is to the right of $v$ in $\tau$ since $dist_{G'}(x_1,x_{k+3}) = k +2$ and $dist(x_1, v) = k + 1$. For the same reason, the vertices $w' \in L$ are also to the right of any vertex of $V(G)$ in $\tau$. Therefore, the restriction of $\tau$ to the vertices of the graph $G$ is exactly $\sigma$ and, thus, every $w \in N^k_G(r) \setminus \{v\}$ is to the left of $v$ in $\tau$. Hence, $v$ is an \cf-branch leaf of $\tau$. 

Now assume that $v$ is an \cf-branch leaf of some BFS ordering $\sigma$ of $G'$. First we show that $\sigma$ starts with a vertex $x_i$. Assume for contradiction that this is not the case and let $s$ be the start vertex of $\sigma$. If every shortest path from $s$ to $x_{k+3}$ runs through $v$, then $v$ is the parent of $x_{k+3}$ in the \cf-tree of $\sigma$; a contradiction. Thus, we may assume that $dist_{G'}(s,x_{k+2}) < dist_{G'}(s,x_{k+3})$. This implies that $dist_{G'}(s,x_{k+2}) = dist_{G}(s,r) + k + 2$. However, $dist_{G'}(s,v) \leq dist_{G}(s,r) + dist_G(r,v) = dist_G(s,r) + k$. Hence, $v$ is to the left of $x_{k+2}$ in $\sigma$ and, therefore, $v$ is the parent of $x_{k+3}$ in the \cf-tree of $\sigma$; a contradiction. 

Thus, we may assume that $\sigma$ starts with some vertex $x_i$. For any vertex $w' \in L$, it holds $dist_{G'}(x_i,w') > dist_{G'}(x_i,v)$. Thus, all the elements of $L$ are to the right of $v$ in $\sigma$. Therefore, all the vertices in $N_G^k(r)$ are to the left of $v$ in $\sigma$ since $v$ is an \cf-branch leaf of $\sigma$. If there is any vertex $y \in V(G) \setminus N_G^k(r)$ to the right of $v$ in $\sigma$, then $dist_{G'}(x_i,y) \geq dist_{G'}(x_i,v)$. Thus, the shortest path from $x_i$ to $v$ runs through $x_{k+3}$ which implies that $dist_{G'}(x_i,x_{k+3}) < dist(x_i, z)$ for all $z \in N_G^{k-1}(r)$. However, as $x_{k+3}$ is adjacent to $v$ in $G$ but not adjacent to any other element of $N_G^k(r)$, vertex $v$ is the leftmost vertex of $N_G^k(r)$ in $\sigma$; a contradiction to the observation above.

Hence, $v$ is the rightmost vertex of $V(G)$ in $\sigma$. Since $\sigma$ starts with some $x_i$, vertex $r$ is the leftmost vertex of $V(G)$ in $\sigma$. Let $\sigma^*$ be the restriction of $\sigma$ to the vertices of $G$. The ordering $\sigma^*$ starts with $r$ and ends with $v$. None of the vertices in $V(G') \setminus V(G)$ had an influence on the order of the vertices of $G$ in $\sigma$. Thus, $\sigma^*$ is a BFS ordering of $G$ starting with $r$ and ending with $v$.
\end{proof}

In contrast to this result, there is a simple characterization of BFS \cl-branch leaves of bipartite graphs.

\begin{theorem}\label{thm:l-branch-leaves-bfs-bipartite}
Let $G$ be a connected bipartite graph with $n(G) \geq 2$. A vertex $v \in V(G)$ is an \cl-branch leaf of some BFS ordering of $G$ if and only if there is an $r \in V(G) \setminus \{v\}$ such that $dist_G(r,w) = dist_{G-v}(r,w)$ for all $w \in V(G) \setminus \{v\}$.
\end{theorem}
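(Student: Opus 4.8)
The plan is to argue layer by layer with respect to a BFS from the vertex $r$ (the one prescribed in the statement, or the start vertex of the given ordering), exploiting that in a connected bipartite graph the BFS layers $N_G^\ell(r)$ are independent sets and appear as consecutive blocks of the ordering, in order of increasing $\ell$. Write $d = dist_G(r,v)$; since $r \neq v$ we have $d \geq 1$ and $v \in N_G^d(r)$. The structural fact I would use repeatedly is the following: for a vertex $w \in N_G^\ell(r)$ every neighbour of $w$ lies in $N_G^{\ell-1}(r)$ or in $N_G^{\ell+1}(r)$ (no same-layer edges, by bipartiteness), and those in $N_G^{\ell+1}(r)$ all succeed $w$, so the $\cl$-parent of $w$ is the rightmost vertex of $N_G(w) \cap N_G^{\ell-1}(r)$ in $\sigma$. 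In particular $v$ can be the $\cl$-parent only of vertices in $N_G(v) \cap N_G^{d+1}(r)$, and for such a $w$ vertex $v$ is the $\cl$-parent of $w$ if and only if $v$ is the rightmost element of $N_G(w) \cap N_G^d(r)$ in $\sigma$.

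For the direction ``$\Leftarrow$'', assume $r \neq v$ satisfies $dist_G(r,w) = dist_{G-v}(r,w)$ for all $w \neq v$. Fix a shortest $r$-$v$-path $P = (p_0 = r, p_1, \dots, p_d = v)$ and let $\sigma$ be the BFS\+$(\rho)$-ordering for any tie-break ordering $\rho$ that begins with $(p_0, \dots, p_d)$. A short induction on $i$ shows that $p_i$ is the first vertex of $N_G^i(r)$ in $\sigma$: once layers $0,\dots,i-1$ have been numbered, the next vertex comes from $N_G^i(r)$, BFS prefers the vertex whose earliest numbered neighbour has the smallest index, and since $p_{i-1}$ is (by induction) the first vertex of $N_G^{i-1}(r)$ its neighbour $p_i$ realises this minimum, while among the layer-$i$ vertices sharing it $p_i$ is leftmost in $\rho$. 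Hence $v = p_d$ is the first vertex of $N_G^d(r)$. Now take any $w \in N_G(v) \cap N_G^{d+1}(r)$: the distance hypothesis provides an $r$-$w$-path of length $d+1$ avoiding $v$, whose penultimate vertex $w'$ is a neighbour of $w$ in $N_G^d(r)$ with $w' \neq v$, hence $w' \succ_\sigma v$. Thus $v$ is not the rightmost element of $N_G(w) \cap N_G^d(r)$, so not the $\cl$-parent of $w$. Therefore $v$ has no child in the $\cl$-tree of $\sigma$, and since $v \neq r$ it is an $\cl$-branch leaf.

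For the direction ``$\Rightarrow$'', let $\sigma$ be a BFS ordering of which $v$ is an $\cl$-branch leaf, and set $r = \sigma(1) \neq v$. Call a vertex $w \neq v$ \emph{obstructing} if every shortest $r$-$w$-path passes through $v$, equivalently $dist_{G-v}(r,w) > dist_G(r,w)$; I must show that no obstructing vertex exists. Suppose one does and pick an obstructing $w$ closest to $r$, say $w \in N_G^\ell(r)$. Since a shortest path from $r$ meets $N_G^j(r)$ in exactly one vertex for each $j \leq \ell$, an obstructing vertex must satisfy $\ell \geq d+1$. Let $U = N_G(w) \cap N_G^{\ell-1}(r) \neq \emptyset$. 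If some $u \in U$ differs from $v$, then $u$ lies in $N_G^{\ell-1}(r)$ with $\ell-1 \geq d$: if $\ell-1 > d$ then $u$ is closer to $r$ than $w$ and hence non-obstructing by minimality, and if $\ell-1 = d$ then $u \neq v$ is non-obstructing by the layer argument; either way there is a shortest $r$-$u$-path avoiding $v$, and extending it by the edge $uw$ (legal as $u, w \neq v$) yields a shortest $r$-$w$-path avoiding $v$, contradicting that $w$ is obstructing. Hence $U = \{v\}$, which forces $\ell = d+1$; but then, using bipartiteness and the block structure of the layers, the unique neighbour of $w$ preceding $w$ in $\sigma$ is $v$, so $v$ is the $\cl$-parent of $w$, contradicting that $v$ is a leaf. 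Thus $r$ witnesses the required condition.

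The main obstacle is the ``$\Rightarrow$'' direction, specifically combining the right choice of witness (an obstructing vertex of minimum distance from $r$) with the minimal-counterexample argument: the key point is that such a vertex must have $v$ as its \emph{unique} predecessor among the BFS layers, because any other predecessor would already supply a $v$-avoiding shortest path, and it is exactly in this unique-predecessor situation that the $\cl$-parent of $w$ is forced to be $v$, which is where the branch-leaf hypothesis enters. The recurring technical burden is the bipartite-and-layered bookkeeping (layers are independent and form consecutive blocks), which is what validates the identity ``$\cl$-parent $=$ rightmost neighbour in the previous layer''; granting that, the ``$\Leftarrow$'' direction reduces to the simple idea of steering a BFS\+ along a shortest $r$-$v$-path so as to place $v$ first in its layer.
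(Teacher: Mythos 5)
Your proof is correct. The ``$\Leftarrow$'' direction is essentially the paper's argument: build a BFS ordering in which the vertices of a fixed shortest $r$-$v$-path are first in their respective layers (the paper simply asserts such an ordering exists, while you realize it via a BFS\+ tie-break, which is a fine way to make it precise), and then use the distance hypothesis plus bipartiteness to show $v$ acquires no child. The ``$\Rightarrow$'' direction, however, takes a genuinely different route. The paper argues globally: in a bipartite graph every tree edge of the \cl-tree $T$ of a BFS ordering joins consecutive layers, so the $r$-$w$-path in $T$ has exactly $dist_G(r,w)$ edges; since a leaf can only be an endpoint of a tree path, this path avoids $v$, and $T - v$ itself certifies $dist_{G-v}(r,w)=dist_G(r,w)$ for every $w \neq v$ in one stroke. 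You instead run a minimal-counterexample argument on ``obstructing'' vertices, showing that an obstructing vertex $w$ of minimum distance would have $v$ as its unique neighbor in the preceding layer, forcing $v$ to be its \cl-parent and contradicting leafness. Both are sound and rest on the same two facts (layeredness of BFS and independence of layers); the paper's version is shorter and exhibits the witnessing shortest paths explicitly as tree paths, whereas yours localizes the use of the leaf hypothesis to a single vertex at the cost of extra case bookkeeping (the $\ell\geq d+1$ bound and the two sub-cases for $u\in U$).
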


\begin{proof}
Assume that there is a vertex $r \in V(G) \setminus \{v\}$ such that $dist_G(r,w) = dist_{G-v}(r,w)$ for all $w \in V(G) \setminus \{v\}$. Let $(r = w_0, \ldots, w_k = v)$ be a shortest path from $r$ to $v$, i.e., $v$ has distance $k$ to $r$. It is easy to see that there is a BFS ordering $\sigma$ of $G$ in which every vertex $w_i$ is the first vertex of the $i$-th layer. Let $T$ be the \cl-tree of $\sigma$ and let $x$ be a vertex in the $(k+1)$-th layer. Due to the condition on $r$, there is a shortest path from $r$ to $x$ in $G$ that does not use vertex $v$. Therefore, $x$ has a neighbor $y$ in the $k$-th layer that is not $v$. Since $v \prec_\sigma y \prec_\sigma x$, vertex $v$ is not the parent of $x$ in $T$. Since $G$ is bipartite, the layers of $\sigma$ are independent sets and, thus, $v$ is neither the parent of any vertex in the $k$-th layer. Hence, $v$ is a leaf of $T$.

Now assume that $v$ is a branch leaf of the \cl-tree $T$ of the BFS ordering $\sigma$. Let $r$ be the start vertex of $\sigma$. Let $w$ be a vertex different from $v$ and $r$. Consider the $r$-$w$-path $P$ in $T$. Since $G$ is bipartite, the edges of $G$ and, thus, the edges of $T$ only connect vertices of consecutive layers. Furthermore, every vertex has a neighbor in its preceding layer. Thus, $P$ has $dist_G(r,w)$ edges. Since $v$ is a leaf of $T$, $P$ does not contain $v$. Therefore, $P$ is also contained in $G - v$ and $dist_{G-v}(r,w) = dist_{G}(r,w)$.
\end{proof}

To check whether the condition of \cref{thm:l-branch-leaves-bfs-bipartite} is fulfilled, we simply make two all-pair-shortest paths computations and compare the results. This can be done in $\O(n(G) \cdot m(G))$ by using $\O(n(G))$ many BFS computations.

\begin{corollary}\label{corol:l-branch-leaves-bfs-bipartite-algo}
Given a connected bipartite graph $G$ and a vertex $v \in V(G)$, we can decide in time $\O(n(G) \cdot m(G))$ whether $v$ is the \cl-branch leaf of some BFS ordering of $G$.
\end{corollary}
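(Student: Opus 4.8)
The plan is to convert the characterization of \cref{thm:l-branch-leaves-bfs-bipartite} directly into an algorithm. By that theorem, $v$ is an $\cl$-branch leaf of some BFS ordering of the connected bipartite graph $G$ if and only if there exists a vertex $r \in V(G)\setminus\{v\}$ that acts as a valid root, meaning $dist_G(r,w) = dist_{G-v}(r,w)$ for every $w \in V(G)\setminus\{v\}$. Hence it suffices to compute all the distances appearing in this condition, in both $G$ and $G-v$, and then test each candidate $r$.

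First I would compute, for every vertex $x\in V(G)$, the distances $dist_G(x,w)$ for all $w\in V(G)$ by running one BFS from $x$ in $G$. Since $G$ is connected we have $m(G)\geq n(G)-1$, so each BFS costs $\O(n(G)+m(G)) = \O(m(G))$, and all $n(G)$ of them together cost $\O(n(G)\cdot m(G))$. Second, I would run a BFS from every $r\in V(G)\setminus\{v\}$ in the graph $G-v$, recording $dist_{G-v}(r,w)$ for all $w$; any $w$ that is unreachable from $r$ in $G-v$ is assigned distance $\infty$. This correctly reflects the intended behaviour: for such $w$ the equality $dist_G(r,w)=dist_{G-v}(r,w)$ fails, because the left-hand side is finite as $G$ is connected. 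This second batch of BFS runs again costs $\O(n(G)\cdot m(G))$, and it handles a possibly disconnected $G-v$ without any special treatment.

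Finally, for each candidate $r\in V(G)\setminus\{v\}$ I would iterate over all $w\in V(G)\setminus\{v\}$ and check whether $dist_G(r,w)=dist_{G-v}(r,w)$, answering ``yes'' as soon as some $r$ passes every check, and ``no'' if none does. Each candidate takes $\O(n(G))$ time, so this comparison phase costs $\O(n(G)^2)\subseteq\O(n(G)\cdot m(G))$. Correctness is immediate from \cref{thm:l-branch-leaves-bfs-bipartite}, and the overall running time is $\O(n(G)\cdot m(G))$. There is no genuine obstacle here; the only minor point requiring care is the semantics of distances in the possibly disconnected graph $G-v$, which is resolved by treating unreachable pairs as being at distance $\infty$.
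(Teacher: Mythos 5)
Your proposal is correct and matches the paper's argument: the paper likewise performs two all-pairs shortest path computations (via $\O(n(G))$ BFS runs in $G$ and in $G-v$) and compares the results against the criterion of \cref{thm:l-branch-leaves-bfs-bipartite}, yielding the $\O(n(G)\cdot m(G))$ bound. Your extra remark about treating unreachable pairs in $G-v$ as distance $\infty$ is a harmless clarification of the same approach.
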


The results of \cref{thm:f-branch-leaves-bfs,corol:l-branch-leaves-bfs-bipartite-algo} are quite surprising since the \cl-tree recognition problem of BFS is \NP-hard on bipartite graphs~\cite{scheffler2022recognition} while the \cf-tree recognition problem of BFS can be solved in linear time~\cite{hagerup1985recognition,manber1990recognizing}.

\subsection{Branch Leaves of MNS-like Searches}
For several subsearches of MNS, the recognition problem of \cf-branch leaves is \NP-complete on weakly chordal graphs.

\begin{theorem}\label{thm:f-branch-mns}
Let $\A$ be one of the following searches: LBFS, LDFS, MCS, MNS. It is \NP-complete to decide whether a vertex of a weakly chordal graph $G$ is an \cf-branch leaf of some $\A$-ordering.
\end{theorem}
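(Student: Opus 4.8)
The plan is to establish membership in $\NP$ and then $\NP$-hardness by a reduction from the problem of recognizing $\cf$-trees of $\A$ on weakly chordal graphs.

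Membership in $\NP$ is immediate: given $G$ and $v$, a vertex ordering $\sigma$ is a certificate, and one checks in polynomial time that $\sigma$ is an $\A$-ordering by running \cref{algo:ls} with the choice in \cref{line:ls} forced to follow $\sigma$ and verifying at each step that the chosen vertex is $\prec_\A$-maximal among the unnumbered vertices (each of $\prec_{LBFS},\prec_{LDFS},\prec_{MCS},\prec_{MNS}$ is polynomial-time computable), and that $v$ is a leaf of the $\cf$-tree of $\sigma$ different from $\sigma(1)$.

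For the hardness I would start from the result of Beisegel et~al.~\cite{beisegel2021recognition} that deciding whether a spanning tree $T$ of a weakly chordal graph $H$ is the $\cf$-tree of some $\A$-ordering of $H$ is $\NP$-complete (shown there for LBFS, LDFS and MCS; the MNS case should follow from an analogous construction, as the argument only uses behaviour common to all four searches). Given such an instance $(H,T)$, I would build a weakly chordal graph $G' \supseteq H$ together with a designated vertex $v$ so that $v$ is an $\cf$-branch leaf of some $\A$-ordering of $G'$ if and only if $T$ is the $\cf$-tree of some $\A$-ordering of $H$. The gadget attached to $H$ has to do two things: in the ``yes''-direction it must turn an $\A$-ordering $\sigma$ of $H$ with $\cf$-tree $T$ into an $\A$-ordering of $G'$ by prepending or interleaving the gadget vertices while keeping $v$ behind every vertex that could otherwise pick $v$ as its leftmost neighbour, so that $v$ becomes a leaf of the $\cf$-tree but is not the start vertex; in the ``no''-direction it must guarantee that leafhood of $v$ is achievable only in this intended way, i.e.\ that the restriction to $V(H)$ of any $\A$-ordering of $G'$ witnessing that $v$ is a branch leaf is an $\A$-ordering of $H$ whose $\cf$-tree is $T$. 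To keep $G'$ weakly chordal, the gadget should be assembled only from vertices that are simplicial in $G'$ or have a clique-like neighbourhood (pendants, twins, short non-chord-creating attachments), so that no induced cycle of length $\geq 5$ and no complement of one is created; verifying this reduces to inspecting a bounded family of potential obstructions.

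The ``yes''-direction is the routine part. The main obstacle is the ``no''-direction, which moreover has to be carried out uniformly for LBFS, LDFS, MCS and MNS: one must rule out ``spurious'' $\A$-orderings in which $v$ ends up a leaf for the wrong reason, e.g.\ because some gadget vertex or mis-ordered vertex of $H$ slips in front of $v$ and absorbs its would-be children without forcing the $H$-part to realize $T$. A natural device for handling all four searches with one argument is to build the gadget from vertices all of whose relevant labels are singletons, so that $\prec_{LBFS},\prec_{LDFS},\prec_{MCS}$ and $\prec_{MNS}$ coincide on them; together with the weak-chordality bookkeeping this is where the bulk of the work lies.
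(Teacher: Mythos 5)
Your membership argument is fine, but the hardness part is a plan rather than a proof: the reduction's central object -- the gadget attached to $H$ -- is never constructed, and you explicitly defer the ``no''-direction, which is exactly the step that carries all the difficulty. As it stands there is no argument that any gadget with the required property exists. The requirement you impose on it is in fact very strong: being an \cf-branch leaf is a local, existential condition (every neighbor of $v$ other than its parent must see some earlier vertex), whereas your reduction needs leafhood of $v$ in $G'$ to certify that the restriction of the ordering to $V(H)$ has \cf-tree \emph{exactly} the prescribed spanning tree $T$, i.e.\ the gadget must encode the entire tree $T$ and exclude every ``spurious'' ordering in which $v$ happens to be a leaf while the $H$-part realizes some other tree. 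No mechanism for this is proposed (singleton labels on gadget vertices do not address it), and it is not at all clear such a gadget exists; so the reduction cannot be accepted as given. A second, smaller gap: for $\A=$ MNS you start from an \NP-completeness result for \cf-tree recognition that you only conjecture (``should follow from an analogous construction''), so even the source problem of your reduction is not established for all four searches.

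For comparison, the paper does not reduce from tree recognition at all. It reduces directly from 3-SAT: the graph $G(\I)$ consists of literal vertices forming the complement of the perfect matching $\{x_i\overline{x}_i\}$, clause vertices forming an independent set joined to their literals, and a universal vertex $t$, which is the designated leaf. Universality of $t$ makes the \NP-hardness argument local and uniform over LBFS, LDFS, MCS and MNS: because $t$ is adjacent to every numbered vertex while $\overline{x}_i$ is not adjacent to $x_i$, at most one literal per variable can precede $t$ in any \A-ordering (otherwise the second literal's label would be strictly dominated by $t$'s), and $t$ being a leaf forces every clause vertex to have an earlier literal neighbor, yielding a satisfying assignment; conversely a satisfying assignment gives an ordering with $t$ as \cf-branch leaf. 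Weak chordality is checked via the two-pair argument of Spinrad and Sritharan~\cite{spinrad1995algorithms}, since adding all edges $x_i\overline{x}_i$ produces a split graph. If you want to salvage your approach, you would have to either supply the tree-encoding gadget with a full ``no''-direction proof, or switch to a satisfiability-style reduction of this kind where the leaf condition itself encodes the constraint.
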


\begin{proof}
The proof of the theorem is inspired by the \NP-completeness proof of the \cf-tree recognition problem of MNS given by Beisegel~et~al.~\cite{beisegel2021recognition}. We construct a polynomial-time reduction from 3-SAT. Let $\I$ be an instance of 3-SAT. W.l.o.g.~we may assume that $\I$ contains at least two clauses. We construct the corresponding graph $G(\I)$ as follows. Let $ X=\{x_1, \dots, x_k,\overline{x}_1,\ldots,\overline{x}_k\} $ be the set of vertices representing the literals of $ \I $. The graph $G(\I)[X]$ forms the complement of the matching in which $x_i$ is matched to $ \overline{x}_i $ for every $ i \in \{1,\ldots, k\} $. Let $ C = \{c_1, \ldots ,c_\ell\} $ be the set of vertices representing the clauses of $ \I $. The set $ C $ forms an independent set in $G(\I)$ and every clause vertex $ c_i $ is adjacent to each vertex of $ X $ whose corresponding literal is contained in the clause associated with $ c_i $. Additionally, we add a universal vertex $t$.

Assume $G(\I)$ has a fulfilling assignment $\B$. Then we create the following \A-ordering $\sigma$. We first number all literal vertices of literals that are set to true in $\B$ and then we number $t$. Since these vertices form a clique, this ordering is a prefix of an $\A$-ordering. We number the remaining vertices following an arbitrary $\A$-ordering. As $\B$ is fulfilling, all clause vertices and all literal vertices have a neighbor that is to the left of $t$ in $\sigma$. Thus, $t$ is an \cf-branch leaf of $\sigma$.

Now assume that $t$ is an \cf-branch leaf of the $\A$-ordering $\sigma$ of $G(\I)$. Let $S$ be the set of literal vertices that are to the left of $t$ in $\sigma$. Since $t$ is universal and the edges $x_i\overline{x}_i$ are missing, the set $S$ contains at most one literal vertex for every variable. Thus, we can define an assignment $\B$ by giving all literals whose vertices are contained in $S$ the value true. If some variable value is not fixed, then we choose an arbitrary value for the variable. If a clause vertex has a parent in the \cf-tree $T$ of $\sigma$, then this parent is an element of $S$ since $t$ is a leaf in $T$. If the clause vertex $c_i$ does not have a parent in $T$, then $c_i$ is the first vertex of $\sigma$. Since there are at least two clause vertices, the second vertex of $\sigma$ is not $t$ but a literal vertex adjacent to $c_i$. Therefore, every clause vertex has a neighbor in $S$ and, thus, $\B$ is a fulfilling assignment.

To see that $G(\I)$ is weakly chordal, we first observe that every pair $(x_i, \overline{x}_i)$ forms a two-pair in $G(\I)$. Spinrad and Sritharan~\cite{spinrad1995algorithms} showed that the graph that results from the addition of an edge between a two-pair is weakly chordal if and only if the initial graph is weakly chordal. If we add all the edges $x_i\overline{x}_i$, then the resulting graph is a split graphs and, thus, $G(\I)$ is weakly chordal.
\end{proof}

\section{Branch Leaves and Chordal Graphs}
\subsection{Branch Leaves of MNS-like Searches}

MNS and all of its subsearches compute PEOs of chordal graphs~\cite{corneil2008unified,tarjan1984simple}. Thus, any \cf-tree or \cl-tree of an MNS ordering is also an \cf-tree or \cl-tree of some PEO. Beisegel et al.~\cite{beisegel2021recognition} showed that this also holds the other way around for a large family of graph searches including LBFS, LDFS, MCS, and MNS, i.e., the rooted \cf-trees and rooted \cl-trees of these searches on chordal graphs are exactly the rooted \cf-trees and rooted \cl-trees of PEOs, respectively. Therefore, we will only characterize \cf-branch leaves and \cl-branch leaves of PEOs.

We start by showing that the \cl-branch leaves of PEOs of a chordal graph are exactly the graph's simplicial vertices.

\begin{theorem}\label{thm:l-branch-mns-chordal}
Let $G$ be a connected chordal graph with $n(G) \geq 2$. A vertex $v \in V(G)$ is an $\cl$-branch leaf of some PEO of $G$ if and only if $v$ is simplicial. 
\end{theorem}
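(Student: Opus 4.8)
The plan is to prove both directions directly, leveraging the definition of a PEO and \cref{lemma:dfstrees}-style reasoning about $\cl$-trees.

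For the easy direction, suppose $v$ is an $\cl$-branch leaf of some PEO $\sigma$ of $G$. I want to conclude $v$ is simplicial. Since $v$ is an $\cl$-branch leaf, $v$ has exactly one neighbor $u$ with $u \prec_\sigma v$, namely its rightmost (indeed only) left-neighbor. But $\sigma$ is a PEO, so considering the step where we process vertices up to $v$: the set $S(v) = \{w : w \prec_\sigma v\}$ contains all left-neighbors of $v$, and $v$ must be simplicial in $G[S(v) \cup \{v\}]$... wait — the PEO condition as stated says each vertex $v$ is simplicial in $G[S(v)]$, which is a statement about the graph induced on the strictly-smaller vertices, not quite what I need. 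Let me reconsider: the relevant fact is that in a PEO, every vertex is simplicial in the subgraph induced by itself together with its \emph{later} neighbors. Equivalently, reversing $\sigma$ gives an ordering in which each vertex's earlier-neighbors form a clique. I should instead argue: let $w$ be the rightmost vertex of $\sigma$. Hmm, that is not $v$ in general. The cleaner route: since $v$ is an $\cl$-branch leaf, $v$ has a unique left-neighbor; but the branch-leaf property does \emph{not} immediately restrict right-neighbors, so I cannot directly conclude $v$ is simplicial from $\sigma$ alone. The correct argument must be that if $v$ is \emph{not} simplicial, then in \emph{no} PEO can $v$ be an $\cl$-branch leaf. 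So suppose $v$ is not simplicial: there exist non-adjacent $a, b \in N_G(v)$. In any PEO $\sigma$, among $a, b, v$ the rightmost one, say $v$ (if $v$ is rightmost among the three), would need $a,b$ in $S(v)$ with $ab \in E$ — contradiction, so $v$ is not rightmost; WLOG $b$ is rightmost, then $a, v \in S(b)$ and $a,v$ must be adjacent (fine) — this gives no contradiction yet. So I additionally need: if $v$ is an $\cl$-branch leaf, $v$ has a unique left-neighbor $u$. Then take $a, b \in N_G(v)$ non-adjacent; at most one of $a,b$ equals $u$, so at least one, say $a$, satisfies $v \prec_\sigma a$. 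Now $a$ has $v$ among its left-neighbors. Consider the rightmost of $a$ and $b$ in $\sigma$: if it is $a$, then since $\sigma$ is a PEO and $v \in S(a)$, and... I need $b \in S(a)$ too, i.e. $b \prec_\sigma a$; not guaranteed. This is getting delicate — I expect the actual proof to route through \cref{lemma:dfstrees} and \cref{lemma:tree}, exploiting that $\cl$-trees of DFS-like searches coincide on chordal graphs, or to use a direct structural lemma about PEOs.

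For the converse, suppose $v$ is simplicial. I would construct a PEO with $v$ as an $\cl$-branch leaf explicitly. Since $G$ is connected with $n(G) \ge 2$, pick a neighbor $u$ of $v$. The idea: build a PEO $\sigma$ of $G$ that ends with $v$, or more robustly, arrange $v$ to have exactly one left-neighbor. One clean approach: order $V(G) \setminus \{v\}$ by a PEO $\rho$ of $G - v$ (which exists since $G-v$ is chordal) chosen so that $u$ is last, then append $v$: $\sigma = \rho \mdoubleplus v$. Since $v$ is simplicial, $N_G(v)$ is a clique, so $v$ simplicial in all of $G$; and every vertex $w \ne v$ is simplicial in $G[S_\sigma(w)] = G[S_\rho(w)]$ because adding $v$ at the very end does not change $S(w)$ for $w \ne v$ — wait, but is $w$ simplicial in $G[S_\rho(w)]$ when $\rho$ is a PEO of $G-v$? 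Yes, provided $v \notin S_\rho(w)$, which holds. And $v$ is simplicial in $G[S(v)] = G[V(G)\setminus\{v\}] = G - v$ iff $N_G(v)$ is a clique, which holds. So $\sigma$ is a PEO of $G$. In its $\cl$-tree, $v$'s parent is its rightmost left-neighbor; since $v$ is the last vertex, all of $N_G(v) \subseteq S(v)$, so $v$ is a leaf — and it is a branch leaf since $v$ is not first (as $n(G) \ge 2$). That settles the converse cleanly.

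The main obstacle is the forward direction: showing non-simplicial vertices cannot be $\cl$-branch leaves of \emph{any} PEO. The cleanest argument I foresee: suppose $v$ is an $\cl$-branch leaf of PEO $\sigma$, with unique left-neighbor $u$. By \cref{lemma:dfstrees} applied to $\sigma$'s $\cl$-tree $T$ (noting PEOs of chordal graphs are also DFS $\cl$-trees via \cref{lemma:tree}, or directly since $\sigma$ restricted appropriately behaves DFS-like on chordal graphs), for every edge $vx$, vertex $x$ is an ancestor or descendant of $v$ in $T$; since $v$ is a leaf, every neighbor $x$ of $v$ is an ancestor of $v$, hence all of $N_G(v)$ lies on the root-to-$v$ path in $T$, which is a path in $G$ — and a path whose vertices are pairwise adjacent to $v$ forces, via chordality, $N_G(v)$ to be a clique. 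I would need to justify carefully that a PEO's $\cl$-tree is a DFS $\cl$-tree on chordal graphs, or find a more elementary substitute, but this is the route I would commit to.
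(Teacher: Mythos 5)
Your backward direction (simplicial $\Rightarrow$ $\cl$-branch leaf) is correct and is essentially the paper's argument: append $v$ to a PEO of $G-v$, observe the result is a PEO of $G$, and note that the last vertex has no children in the $\cl$-tree. Two side remarks: the requirement that $u$ be last in $\rho$ is unnecessary, and your earlier claim that an $\cl$-branch leaf has a \emph{unique} left-neighbor is false in general -- being a leaf of the $\cl$-tree only means no later vertex has $v$ as its \emph{rightmost} left-neighbor; $v$ itself may have many left-neighbors.

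The forward direction, as you commit to it, has a genuine gap, and it lies in the very last step. The intermediate fact you hesitate over is actually true: by an easy induction (if $uv\in E(G)$ with $u\prec_\sigma v$ and $u$ is not the parent $p(v)$, then $u\,p(v)\in E(G)$ because left-neighbors of $v$ form a clique in a PEO, so $u$ is an ancestor of $p(v)$ by induction), the $\cl$-tree of any PEO satisfies the ancestor/descendant condition of \cref{lemma:dfstrees} and is therefore a DFS $\cl$-tree. But this reduction discards exactly the information you need: a branch leaf of a DFS $\cl$-tree of a chordal graph need \emph{not} be simplicial, so no amount of chordality reasoning about the root-to-$v$ path can finish the proof. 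Concretely, take $V=\{a,b,c,v\}$ with edges $ab,bc,va,vb,vc$ (i.e., $K_4$ minus the edge $ac$), which is chordal. The DFS ordering $(a,b,c,v)$ has $\cl$-tree equal to the path $a\!-\!b\!-\!c\!-\!v$: vertex $v$ is a branch leaf, all of $N_G(v)=\{a,b,c\}$ lies on the root-to-$v$ path and each of these vertices is adjacent to $v$, yet $ac\notin E(G)$, so $N_G(v)$ is not a clique. This shows your claim ``a path whose vertices are pairwise adjacent to $v$ forces, via chordality, $N_G(v)$ to be a clique'' is false; the cycle through $v$ can be killed by chords incident to $v$ itself. (Of course $(a,b,c,v)$ is not a PEO -- which is precisely the point: you must use the PEO property of $\sigma$ itself, not merely the DFS-tree property of its $\cl$-tree.) The paper's proof does this directly and locally: if $v$ is not simplicial, some neighbor of $v$ lies to its right (otherwise the left-neighbors of $v$ would be a clique); take the leftmost such neighbor $w$ and its $\cl$-parent $x$; if $x\neq v$ then $v\prec_\sigma x\prec_\sigma w$ and, since $v$ and $x$ are both left-neighbors of $w$ in a PEO, $vx\in E(G)$, contradicting the choice of $w$; hence $v$ is the parent of $w$ and not a leaf. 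Your proposal needs this (or an equivalent use of the PEO property to the right of $v$) to close the forward direction.
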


\begin{proof}
If $v$ is a simplicial vertex, then there is a PEO $\sigma$ that ends with $v$. Vertex $v$ is an $\cl$-branch leaf of $\sigma$.

For the other direction, let $\sigma$ be a PEO and let $v$ be a non-simplicial vertex of $G$. Hence, not all neighbors of $v$ are to the left of $v$ in $\sigma$. Let $w$ be the leftmost neighbor of $v$ in $\sigma$ that is to the right of $v$ in $\sigma$. Let $x$ be the parent of $w$ in the \cl-tree $T$ of $\sigma$. If $x$ is not equal $v$, then it holds $v \prec_\sigma x \prec_\sigma w$. As $\sigma$ is a PEO and $vw,xw \in E(G)$, the edge $vx$ is also in $E(G)$; a contradiction to the choice of $w$. Hence, $v$ is the parent of $w$ in $T$ and $v$ is not an \cl-branch leaf of $\sigma$.
\end{proof}

Since we can decide in linear time whether a vertex is simplicial~\cite{beisegel2019end-vertex}, we can recognize \cl-branch leaves of PEOs in linear time.

\begin{corollary}\label{corol:algo-l-branch-mns-chordal}
Given a connected chordal graph $G$ and a vertex $v \in V(G)$, we can decide in time $\O(n(G) + m(G))$ whether $v$ is the \cl-branch leaf of some PEO of $G$. Therefore, we can also decide in time $\O(n(G) + m(G))$ whether $v$ is the \cl-branch leaf of some LBFS, LDFS, MCS, or MNS ordering.
\end{corollary}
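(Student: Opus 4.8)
The plan is to obtain \cref{corol:algo-l-branch-mns-chordal} as an immediate consequence of \cref{thm:l-branch-mns-chordal} together with two facts already recorded in the excerpt. First, by \cref{thm:l-branch-mns-chordal}, deciding whether $v$ is an \cl-branch leaf of some PEO of $G$ is precisely the same task as deciding whether $v$ is simplicial in $G$. Since testing whether the open neighborhood of a single vertex induces a clique can be carried out in $\O(n(G) + m(G))$ time (see the reference~\cite{beisegel2019end-vertex} cited immediately before the statement), this settles the PEO part of the corollary.

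For the second sentence I would invoke the result of Beisegel et al.~\cite{beisegel2021recognition} quoted at the beginning of this subsection: on a chordal graph the rooted \cl-trees of LBFS, LDFS, MCS, and MNS coincide exactly with the rooted \cl-trees of PEOs. In particular, fixing a root, the set of vertices that occur as a leaf distinct from the root in some such tree does not depend on which of these four searches is used and equals the corresponding set for PEOs. Hence $v$ is an \cl-branch leaf of some LBFS, LDFS, MCS, or MNS ordering of $G$ if and only if it is an \cl-branch leaf of some PEO of $G$, so the same linear-time simpliciality test applies verbatim.

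I do not expect a genuine obstacle here; the proof is essentially a bookkeeping step. The only point that needs a line of care is the distinction between root leaves and branch leaves: the equivalence between the search \cl-trees and the PEO \cl-trees is an equivalence of \emph{rooted} trees, and the root of a search tree is simply the start vertex of the underlying ordering, so the property ``leaf but not the root'' transfers in both directions without loss. With that remark in place the argument is complete.
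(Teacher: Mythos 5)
Your proposal matches the paper's justification exactly: \cref{thm:l-branch-mns-chordal} reduces the PEO case to a linear-time simpliciality test via~\cite{beisegel2019end-vertex}, and the Beisegel et al.~\cite{beisegel2021recognition} equivalence of rooted \cl-trees of LBFS, LDFS, MCS, and MNS with those of PEOs on chordal graphs transfers the answer to the other searches, with your remark on the root-versus-branch distinction being the same implicit bookkeeping the paper relies on. No gaps; nothing further is needed.
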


Obviously, simplicial vertices are also \cf-branch leaves of PEOs. However, there are further \cf-branch leaves.

\begin{theorem}\label{thm:f-branch-mns-chordal}
Let $G$ be a connected chordal graph with $n(G) \geq 2$. A vertex $v \in V(G)$ is an $\cf$-branch leaf of some PEO of $G$ if and only if the graph $G[N_G(v)]$ has a dominating clique.
\end{theorem}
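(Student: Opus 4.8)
Write $H := G[N_G(v)]$; since $G$ is connected with $n(G)\ge 2$, the graph $H$ is nonempty. The plan is to show that the dominating cliques of $H$ are precisely the sets that can arise as ``the neighbours of $v$ lying to the left of $v$'' in a PEO of $G$ for which $v$ is an $\cf$-branch leaf. I will use two ingredients already available: every MNS ordering of a chordal graph is a PEO, and, unwinding the definition of the $\cf$-tree, $v$ is an $\cf$-branch leaf of a GS ordering $\sigma$ if and only if $v$ is not the first vertex of $\sigma$ and every neighbour $w$ of $v$ with $v\prec_\sigma w$ has some neighbour $u$ with $u\prec_\sigma v$.

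For the ``if'' direction I would start from a dominating clique $Q$ of $H$ and build the PEO explicitly. Since $Q\subseteq N_G(v)$, the set $Q\cup\{v\}$ is a clique of $G$, so — using the easy observation that while the already numbered vertices form a clique, the still unnumbered common neighbours of those vertices carry an inclusion-maximal label — there is an MNS ordering $\sigma$ of $G$ whose first $|Q|+1$ vertices are the elements of $Q$ in arbitrary order followed by $v$. As $G$ is chordal, $\sigma$ is a PEO. Now $v$ is not the first vertex of $\sigma$ because $Q\neq\emptyset$; a neighbour $w$ of $v$ with $w\prec_\sigma v$ trivially fails to have $v$ as its leftmost neighbour; and a neighbour $w$ of $v$ with $v\prec_\sigma w$ lies in $N_G(v)\setminus Q$, hence, by domination, has a neighbour in $Q$, and that neighbour precedes $v$ in $\sigma$. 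So $v$ is an $\cf$-branch leaf of $\sigma$.

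For the ``only if'' direction, let $\sigma$ be a PEO with $\cf$-branch leaf $v$ and let $E$ be the set of neighbours of $v$ lying to the left of $v$ in $\sigma$. By the defining property of a PEO, $E$ is a clique, and $E\subseteq N_G(v)$, so $E$ is a clique of $H$; it is nonempty since $v$ is a branch (not root) leaf. To see that $E$ dominates $H$, take $w\in N_G(v)\setminus E$. Then $v\prec_\sigma w$, and since $v$ is a leaf of the $\cf$-tree it is not the leftmost neighbour of $w$, so $w$ has a neighbour $u$ with $u\prec_\sigma v$. Applying the perfect-elimination property \emph{at $w$} — both $u$ and $v$ are neighbours of $w$ to the left of $w$ — yields $uv\in E(G)$, whence $u\in N_G(v)$ and therefore $u\in E$. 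Thus every vertex of $N_G(v)\setminus E$ has a neighbour in $E$, so $E$ is a dominating clique of $H$.

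The bookkeeping in the ``if'' direction (checking that the claimed prefix is really realisable by MNS and that the resulting ordering is a PEO) is routine given the label-search framework, as is the unpacking of ``$\cf$-branch leaf''. The step I expect to carry the real weight is the last one: passing from ``$w$ has \emph{some} earlier neighbour $u$'' to ``$w$ has a neighbour \emph{inside} $N_G(v)$'' by invoking the elimination property at $w$ rather than at $v$ — this is exactly what confines the domination witness to $G[N_G(v)]$ and makes $E$ a dominating clique of $H$ rather than merely a dominating set of $v$'s neighbourhood in $G$.
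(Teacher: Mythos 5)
Your proof is correct and takes essentially the same route as the paper: in the forward direction you start a search on the dominating clique (you use MNS where the paper uses LBFS) so that $v$ is never anyone's leftmost neighbour, and in the reverse direction you take the left-neighbours of $v$ and apply the perfect-elimination property at each later neighbour $w$ to pull its earlier neighbour into $N_G(v)$, exactly as the paper does. The remaining differences (placing $v$ immediately after the clique, and not treating the case $S = N_G(v)$ separately) are cosmetic.
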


\begin{proof}
First assume that $G[N_G(v)]$ has a dominating clique $C$. It is obvious that there is an LBFS ordering $\sigma$ of $G$ that starts with the vertices of $C$. The ordering $\sigma$ is a PEO. Since all neighbors of $v$ have a neighbor in $C$ or are elements of $C$, $v$ is an \cf-branch leaf of $\sigma$. 

Now let $v$ be an $\cf$-branch leaf of the PEO $\sigma$. Let $S$ be the set of neighbors of $v$ that are to the left of $v$ in $\sigma$. The set $S$ induces a clique of $G$. Thus, if $S = N_G(v)$, then $G[N_G(v)]$ is a clique and we are done. Hence, we may assume that there is a vertex $w \in N_G(v) \setminus S$. As $w$ is not a child of $v$ in the $\cf$-tree of $\sigma$, there is a vertex $x \in N_G(w)$ with $x \prec_\sigma v$. Since $\sigma$ is a PEO, vertex $x$ is a neighbor of $v$ and, thus, $x \in S$. Thus, any neighbor of $v$ that is not in $S$ has a neighbor in $S$ and, hence, $S$ induces a dominating clique of $G[N_G(v)]$. 
\end{proof}

To decide the complexity of the \cf-branch leaf recognition problem of PEOs, we examine the complexity of deciding the existence of a dominating clique in a chordal graph. Kratsch~et~al.~\cite{kratsch1994dominating} showed that such a clique exists if and only if the diameter of the graph is at most three.

\begin{theorem}[Kratsch et al.~{\cite{kratsch1994dominating}}]\label{thm:chordal-dominating-clique}
A chordal graph $G$ has a dominating clique if and only if the diameter of $G$ is at most three.
\end{theorem}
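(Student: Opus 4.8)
The plan is to prove both directions using the structure theory of chordal graphs, specifically perfect elimination orderings and the relationship between dominating cliques and short paths.

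\textbf{Easy direction.} First I would show that if $G$ has a dominating clique $C$, then $\mathrm{diam}(G) \leq 3$. Take any two vertices $u, w \in V(G)$. Since $C$ is dominating, there exist $c_u \in N[u] \cap C$ and $c_w \in N[w] \cap C$. Since $C$ is a clique, $c_u$ and $c_w$ are adjacent or equal. Concatenating the (at most unit-length) edges $u$--$c_u$, $c_u$--$c_w$, $c_w$--$w$ gives a $u$-$w$-walk of length at most $3$, so $\mathrm{dist}_G(u,w) \leq 3$. This direction uses no chordality at all.

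\textbf{Hard direction.} The main work is showing that a chordal graph of diameter at most $3$ has a dominating clique; this is where chordality is essential (it fails for general graphs, e.g.\ $C_4$). The approach I would take is to run a search such as LBFS (or use a perfect elimination ordering) to obtain a structured BFS-layering of $G$ from a well-chosen start vertex, and then extract a maximal clique from a middle layer that dominates everything. Concretely: pick a vertex $r$ and consider the layers $N^0, N^1, N^2, \dots$; since $\mathrm{diam}(G) \le 3$, one typically works with the layers around the "center." One then argues, using the fact that chordal graphs have no induced $C_4$, that a carefully chosen clique (for instance, a maximal clique containing an appropriate edge, or the closed neighborhood intersection of two vertices realizing the radius) meets or dominates every layer. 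The no-induced-$C_4$ property is what forces the relevant neighborhoods to overlap in a clique rather than merely in an independent set.

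\textbf{Main obstacle.} The crux will be the bookkeeping in the hard direction: identifying the right vertex or edge whose maximal clique extension dominates the whole graph, and verifying domination layer by layer using only the chordality (no long induced cycles) plus the diameter bound. In particular one must handle the case analysis on which layer an arbitrary vertex lies in relative to the chosen clique, repeatedly invoking that a four-cycle $a$--$b$--$c$--$d$--$a$ in a chordal graph must have a chord. Since this is a cited result of Kratsch et al.~\cite{kratsch1994dominating}, I would either reproduce their argument along these lines or simply invoke it; a fully self-contained proof would require spelling out this layer-domination analysis in detail.
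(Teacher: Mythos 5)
The paper does not prove this statement at all: it is imported verbatim as a cited result of Kratsch et al.~\cite{kratsch1994dominating} and used as a black box in \cref{corol:algo-f-branch-mns-chordal}. So simply invoking the reference, as you suggest at the end, is exactly what the paper does, and your ``easy direction'' (dominating clique implies diameter at most three, with no use of chordality) is complete and correct as written.

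However, as a self-contained proof your attempt has a genuine gap in the hard direction. You never actually identify the dominating clique: the candidates you float (``a maximal clique containing an appropriate edge,'' ``the closed neighborhood intersection of two vertices realizing the radius'') are not shown to dominate, and without further argument they need not. For instance, a maximal clique through a central vertex of a chordal graph of diameter three can easily miss vertices at distance two from every vertex of that clique, so the ``layer-by-layer'' domination you describe does not follow from chordality plus the diameter bound in the direct way you sketch; the entire content of the theorem is precisely choosing the right clique (Kratsch et al.\ do this via a more delicate argument, e.g.\ working with minimal separators/clique-tree structure or with a clique chosen extremal with respect to domination, not merely ``no induced $C_4$ forces neighborhoods to overlap''). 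As it stands, your hard direction is a plan with the key construction and its verification left open, so it is not a proof --- though it is a perfectly reasonable account of where the difficulty lies, and citing \cite{kratsch1994dominating} for it is consistent with how the paper itself treats the statement.
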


As the diameter of a graph can be determined by computing $n(G)$ many BFS orderings, we can decide the existence of a dominating clique in a chordal graph in polynomial time. Although it is unlikely that the diameter of a chordal graph can be computed in linear time,\footnote{Even on split graphs, the diameter cannot be computed in subquadratic time unless the Strong Exponential Time Hypothesis fails~\cite{borassi2016square}.}
we can improve our algorithm to decide the existence of a dominating clique in linear time. To this end, we can use the following result of Corneil et al.~\cite{corneil2001diameter}.

\begin{theorem}[Corneil et al.~{\cite{corneil2001diameter}}]\label{thm:chordal-diameter}
Let $G$ be a chordal graph and let $v \in V(G)$ be the end-vertex of some LBFS ordering of $G$. If $ecc(v) < diam(G)$, then $ecc(v)$ is even and $ecc(v) = diam(G) - 1$.
\end{theorem}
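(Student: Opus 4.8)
The plan is to reduce the theorem to two statements about a vertex $v$ that is the end-vertex of an LBFS ordering $\sigma$ of a chordal graph $G$, where I abbreviate $D := diam(G)$ and $e := ecc_G(v)$: (a)~$e \ge D-1$, and (b)~if $e = D-1$, then $D$ is odd. Since $e \le D$ always holds, (a) together with the hypothesis $e < D$ forces $e = D-1$, and then (b) makes $e = D-1$ even, which is exactly the assertion. We may assume that $G$ is not complete, as otherwise $D \le 1$ and there is nothing to show. Two classical facts are used freely: an end-vertex of an LBFS ordering of a chordal graph is simplicial; and, since LBFS is layered, if $\sigma$ starts in $u$ then $v$ lies in the last layer of $u$, so $dist_G(u,v) = ecc_G(u)$.

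For (a) I would induct on $n(G)$ along a clique minimal separator $S$ of $G$, which exists because $G$ is chordal and not complete. Let $C_1,\dots,C_t$ be the components of $G-S$ and set $G_i := G[C_i \cup S]$. The argument rests on two inputs. First, since $v$ is simplicial, $N_G[v]$ is a clique, so $v$ lies in a unique $G_j$, is still simplicial there, and---this is the delicate point---remains the end-vertex of an LBFS ordering of $G_j$; one has to show that the restriction of $\sigma$ to $V(G_j)$ behaves like an LBFS of $G_j$ ending in $v$, using that $S$ is a clique separating $C_j$ from the rest, so once the search has entered $C_j$ the relative LBFS priorities among the remaining vertices of $C_j \cup S$ coincide with those inside $G_j$. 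Second, distances transport cleanly across $S$: any shortest path of $G$ joining two distinct $C_i$'s crosses $S$ in a single clique vertex, $dist_G(v,\cdot)$ agrees with $dist_{G_j}(v,\cdot)$ on $V(G_j)$, and for $w$ outside $G_j$ it equals $dist_{G_j}(v,s)+dist_G(s,w)$ for the vertex $s \in S$ nearest to $v$. Hence a diametral pair of $G$ either lies inside one $G_i$ or is split by $S$ in a controlled way, and applying the inductive bound in the $G_i$ containing a diametral endpoint, then transporting it back through $S$, yields $dist_G(v,w) \ge D-1$ for a suitable diametral vertex $w$, so $e \ge D-1$. I expect the verification that $\sigma$ restricts to an LBFS of $G_j$, together with the distance bookkeeping across $S$, to be the main obstacle of the whole proof.

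For (b) I would assume $e = D-1$ with $D$ even and derive a contradiction by a parity analysis of the BFS layering $N_G^0(v),\dots,N_G^e(v)$ rooted at $v$, combined with chordality. From part (a), a diametral vertex can be placed in the last layer $N_G^e(v)$; call it $x$, fix a vertex $z$ with $dist_G(x,z)=D$, and analyze a shortest $x$--$z$ path $P$, whose length $D$ is even. Tracking layer indices along $P$ (consecutive vertices change the index by at most one), the even parity of $D$ forces $P$ to run \emph{flat} inside a single layer along some maximal subpath $Q$; choosing, via shortest $v$-paths, a vertex one layer closer to $v$ that is a common neighbour of the two ends of $Q$, chordality applied to the cycle bounded by $Q$ and that neighbour lets one shortcut $P$ through it into a detour of the same length reaching a strictly smaller layer. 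Iterating this while keeping a path to the far endpoint eventually exhibits a vertex at distance $e+1=D$ from $v$, contradicting $ecc_G(v)=e$. Establishing the right diametral configuration in the last layer, making the \emph{flat}-subpath claim precise, and excluding long induced cycles during the shortcutting---where the parity of $D$ is exactly what makes the bookkeeping close---are the delicate points of this part; the remaining steps are routine distance arithmetic.
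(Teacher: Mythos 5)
Note first that the paper does not prove this statement at all: \cref{thm:chordal-diameter} is imported verbatim from Corneil et al.~\cite{corneil2001diameter} and used as a black box, so your argument has to stand entirely on its own --- and as written it does not. In part (a), the whole induction rests on the claim that the LBFS end-vertex $v$ of $G$ is again an LBFS end-vertex of $G_j = G[C_j \cup S]$, which you yourself flag as ``the main obstacle'' and do not prove. The suggested justification --- that the restriction of $\sigma$ to $V(G_j)$ is an LBFS ordering of $G_j$ --- is precisely the kind of statement that is delicate for label-based searches: vertices of the other components keep injecting label entries into the vertices of $S$, which can change the relative priorities of the $S$-vertices among themselves and against the vertices of $C_j$, so the restricted ordering need not be generated by LBFS on $G_j$. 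Moreover, even granting this, the inductive statement $ecc_{G_j}(v) \ge diam(G_j)-1$ is too weak for the ``transport across $S$'' step: when a diametral pair of $G$ lies outside $G_j$, you need a lower bound on $dist_{G_j}(v,S)$ (i.e., that $v$ sits deep inside $C_j$), and nothing in the inductive hypothesis provides it.

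Part (b) has concrete errors beyond vagueness. From (a) you cannot ``place a diametral vertex in the last layer $N_G^e(v)$'': for a diametral pair $x,z$ the triangle inequality only gives $\max(dist_G(v,x),dist_G(v,z)) \ge (e+1)/2$, not $e$, and a vertex realizing $ecc_G(v)$ need not belong to any diametral pair. The pivotal parity claim --- that the even length of $P$ ``forces $P$ to run flat inside a single layer'' --- is false as stated: each edge of $P$ changes the layer index by $-1$, $0$ or $+1$, so an even-length path between two vertices of the same layer can consist entirely of non-flat edges; parity alone forces no flat subpath. The subsequent ``shortcut through a lower-layer common neighbour and iterate'' step is asserted, not argued, and this is exactly where the real content of the theorem lies (the published proofs use genuine metric machinery for chordal graphs, such as the Helly property of balls or structural properties of LBFS layerings, none of which your sketch supplies). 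So both halves of the proposal have genuine gaps at their central steps.
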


Combining \cref{thm:f-branch-mns-chordal,thm:chordal-dominating-clique,thm:chordal-diameter}, we can give a linear-time recognition algorithm for \cf-branch leaves of PEOs.

\begin{corollary}\label{corol:algo-f-branch-mns-chordal}
Given a chordal graph $G$ and a vertex $v \in V(G)$, we can decide in time $\O(n(G) + m(G))$ whether $v$ is an \cf-branch leaf of some PEO of $G$. Therefore, we can also decide in time $\O(n(G) + m(G))$, whether $v$ is the \cf-branch leaf of some LBFS, LDFS, MCS, or MNS ordering.
\end{corollary}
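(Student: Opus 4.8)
The plan is to combine the structural characterization of \cref{thm:f-branch-mns-chordal} with the diameter characterization of \cref{thm:chordal-dominating-clique}, and then use \cref{thm:chordal-diameter} to avoid the expensive all-pairs diameter computation. By \cref{thm:f-branch-mns-chordal}, the vertex $v$ is an \cf-branch leaf of some PEO of $G$ if and only if the subgraph $H := G[N_G(v)]$ has a dominating clique. Note that $H$ is again chordal (induced subgraphs of chordal graphs are chordal) and connected — the latter because $G$ is chordal, so the neighborhood of any vertex induces a connected subgraph; if $H$ is empty we can immediately reject since $v$ then has no neighbor to serve as parent, but as $G$ is connected with $n(G)\geq 2$ this does not occur. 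Hence by \cref{thm:chordal-dominating-clique}, $H$ has a dominating clique if and only if $diam(H) \leq 3$. So the task reduces to deciding, in linear time, whether the chordal graph $H$ has diameter at most three.

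First I would compute an LBFS end-vertex $u$ of $H$: run a single LBFS on $H$ and take its last vertex. This takes $\O(n(H)+m(H))$ time. Then I would run a BFS from $u$ to compute $ecc_H(u)$ in linear time. Now invoke \cref{thm:chordal-diameter}: either $ecc_H(u) = diam(H)$, or $ecc_H(u) = diam(H)-1$ with $ecc_H(u)$ even. Thus $diam(H) \in \{ecc_H(u),\, ecc_H(u)+1\}$, and crucially, the value is determined by the parity of $ecc_H(u)$ together with the threshold we care about. Concretely: if $ecc_H(u) \leq 2$, then $diam(H) \leq ecc_H(u)+1 \leq 3$, so we accept. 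If $ecc_H(u) \geq 4$, then $diam(H) \geq ecc_H(u) \geq 4$, so we reject. The only borderline case is $ecc_H(u) = 3$: since $3$ is odd, \cref{thm:chordal-diameter} forbids the situation $ecc_H(u) < diam(H)$ with $ecc_H(u)$ odd, hence $ecc_H(u) = diam(H) = 3$ and we accept. In every case the decision is made from $ecc_H(u)$ alone, using only a constant amount of extra work beyond one LBFS and one BFS.

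Finally, since $H = G[N_G(v)]$ can be extracted from the adjacency-list representation of $G$ in $\O(n(G)+m(G))$ time and $n(H)+m(H) \leq n(G)+m(G)$, the whole procedure runs in $\O(n(G)+m(G))$ time. The second sentence of the corollary then follows immediately: Beisegel et al.~\cite{beisegel2021recognition} showed that the rooted \cf-trees of LBFS, LDFS, MCS, and MNS on chordal graphs coincide with the rooted \cf-trees of PEOs, so a vertex is an \cf-branch leaf of one of these searches precisely when it is an \cf-branch leaf of some PEO, which we have just decided in linear time.

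I do not expect a serious obstacle here; the one point requiring care is the parity argument in the $ecc_H(u)=3$ case, where one must be sure that \cref{thm:chordal-diameter} genuinely rules out $diam(H)=4$ — it does, because that would force $ecc_H(u) = diam(H)-1 = 3$ to be even, a contradiction. A minor hygienic point is to confirm $H$ is connected (so that \cref{thm:chordal-dominating-clique} applies as stated and BFS reaches all of $H$); this follows from chordality of $G$ as noted above.
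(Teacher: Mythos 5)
Your algorithm and its justification are essentially the paper's own proof: reduce via \cref{thm:f-branch-mns-chordal,thm:chordal-dominating-clique} to deciding whether $H = G[N_G(v)]$ has diameter at most three, then settle this with one LBFS plus one BFS using \cref{thm:chordal-diameter}; your parity discussion of the case $ecc_H(u)=3$ is exactly the (implicit) argument in the paper, and the final sentence about LBFS, LDFS, MCS, and MNS is handled the same way via the result of Beisegel et al.

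One claim you make along the way is false: in a chordal graph the neighborhood of a vertex need \emph{not} induce a connected subgraph. Take $G$ to be the path on three vertices with middle vertex $v$ and endpoints $a,b$; then $G[N_G(v)]$ consists of the two isolated vertices $a$ and $b$. So you cannot dismiss the disconnected case by appealing to chordality. This does not endanger the corollary: if $H$ is disconnected it has no dominating clique (a clique lies inside one component and cannot dominate vertices of another), so the algorithm should simply reject, and disconnectedness is detected in linear time by checking whether the BFS from the LBFS end-vertex reaches all of $H$. Without that explicit check your procedure as written can err -- for instance, if $H$ is the disjoint union of two edges, the BFS from the end-vertex reports eccentricity $1$ within its component and you would wrongly accept, although $v$ is not an $\cf$-branch leaf of any PEO. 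The paper's proof is silent on this point as well, so this is a small repair rather than a divergence from its argument.
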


\begin{proof}
Due to \cref{thm:f-branch-mns-chordal,thm:chordal-dominating-clique}, it is sufficient to check whether $G' = G[N_G(v)]$ has diameter 3. We compute an LBFS ordering $\sigma$ of $G'$ in linear time. Let $v$ be the end-vertex of $\sigma$. We compute the eccentricity of $v$ in $G'$ in linear time by starting a BFS in $v$. If $ecc_{G'}(v) > 3$, then the diameter of $G'$ is larger than $3$. If $ecc_{G'}(v) = 3$, then, by \cref{thm:chordal-diameter}, $diam(G') = 3$. If $ecc_{G'}(v) < 3$, then $diam(G') \leq ecc_{G'}(v) + 1 \leq 3$, due to \cref{thm:chordal-diameter}.
\end{proof}

\subsection{Branch Leaves of BFS}

\begin{figure}[t]
\centering
\begin{tikzpicture}[vertex/.style={inner sep=2pt,draw,circle}, path/.style={decoration={snake, amplitude=0.3mm}, decorate}, edge/.style={-}, noedge/.style={dotted}, scale=1.25]
\footnotesize
\node[vertex,label={-90:$u$}]  (1) at (0,0) {};
\node[vertex,label={-90:$v$}]  (2) at (0.5,0) {};
\node[vertex,label={-90:$w$}]  (3) at (1,0) {};
\node[vertex,label={-90:$x$}]  (4) at (1.5,0) {};
\node[vertex,label={-90:$y$}]  (5) at (2,0) {};
\node[vertex,label={90:$z$}]  (6) at (1,0.5) {};

\draw[treeedge] (1) -- (2) -- (3) -- (4) -- (5);
\draw[treeedge] (3) -- (6);
\draw (5) -- (6) -- (4) -- (6) -- (2) -- (6) -- (1);
\end{tikzpicture}
\caption[BFS \cf-branch leaf of a chordal graph]{The given graph $G$ is chordal. There is no dominating clique in the graph $G[N_G(z)]$. However, the given spanning tree is the \cf-tree of the BFS ordering $(w,v,x,z,u,y)$ and, thus, $z$ is a \cf-branch leaf of BFS.}\label{fig:bfs-f-leaf-chordal}
\end{figure}
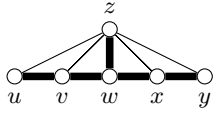

The condition given in \cref{thm:f-branch-mns-chordal} is also sufficient for a vertex to be a BFS \cf-branch leaf since every LBFS ordering is also a BFS ordering. However, it is not necessary as can be seen in \cref{fig:bfs-f-leaf-chordal}. To characterize BFS \cf-branch leaves of chordal graphs, we start with the following two lemmas.

\begin{lemma}\label{lemma:bfs-chordal1}
Let $G$ be a chordal graph and let $r$ be a vertex in $V(G)$. Let $x$ and $y$ be two vertices in $N_G^i(r)$. If there is a vertex $z \in N_G^{i+1}(r)$ which is adjacent to both $x$ and $y$, then $xy \in E(G)$. 
\end{lemma}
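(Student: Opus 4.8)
The plan is to argue by contradiction: assume $xy \notin E(G)$ and exhibit a forbidden induced cycle. Since $x, y \in N_G^i(r)$, each of them has a neighbor in layer $i-1$ on a shortest path back to $r$; more generally there are shortest $r$--$x$ and $r$--$y$ paths $P_x$ and $P_y$, each of length $i$. Together with $z$, the walk $x \,\text{--}\, z \,\text{--}\, y$ closes these two paths into a closed walk through $r$ of length $2i+2$. First I would extract from this closed walk an induced cycle containing the edges $xz$ and $zy$: concretely, take the union of $P_x$, $P_y$ and the two edges at $z$, and repeatedly shortcut any chord until what remains is an induced cycle $C$. The key observation is that $z \in N_G^{i+1}(r)$ guarantees $z$ is not adjacent to any vertex of $P_x \cup P_y$ in layer $\le i-1$ (a neighbor in layer $\le i-1$ would force $z$ into layer $\le i$), so $z$ survives this shortcutting and has exactly two neighbors $x', y'$ on $C$, both in layer $i$.

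Next I would pin down that $C$ has length at least $4$, which is where the assumption $xy \notin E(G)$ enters. If the shortcutting collapsed $C$ down to a triangle, that triangle would have to be $x'y'z$ with $x'y' \in E(G)$; but the only layer-$i$ neighbors of $z$ available are among $\{x, y\}$ together with possibly intermediate vertices of $P_x, P_y$ — and I would need to check that in the minimal situation the two neighbors of $z$ on $C$ are forced to be $x$ and $y$ themselves. The cleanest way to do this is to choose $P_x$ and $P_y$ from the start and note that all their vertices lie in layers $0,1,\dots,i$, hence $z$ (in layer $i+1$) can only be adjacent to the layer-$i$ endpoints $x$ and $y$; so the neighbors of $z$ on any induced cycle built from $P_x \cup P_y \cup \{xz, zy\}$ are exactly $x$ and $y$. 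Thus a triangle would give $xy \in E(G)$, contradicting the assumption, so $C$ has length $\ge 4$, contradicting chordality.

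The main obstacle I anticipate is the bookkeeping in the shortcutting step: one must be careful that $P_x$ and $P_y$ may share vertices (they both start at $r$), so the union $P_x \cup P_y \cup \{xz, zy\}$ is a ``theta-like'' subgraph rather than a clean cycle, and I should instead work with the portion of the two paths from their last common vertex $r'$ onward, forming a genuine cycle $r' \,\text{--}\, \cdots \,\text{--}\, x \,\text{--}\, z \,\text{--}\, y \,\text{--}\, \cdots \,\text{--}\, r'$ of length $2(i - \mathrm{dist}_G(r,r')) + 2 \ge 4$ when $xy \notin E(G)$ (the length is $\ge 4$ precisely because if it were a triangle we would again be in the $xy \in E(G)$ case). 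Then I would take any chordless cycle inside this cycle through the vertex $z$; since $z$'s only two neighbors on the cycle are in layer $i$ and $z$ is non-adjacent to everything else on it, $z$ remains on the chordless cycle, which therefore has length $\ge 4$ — the desired contradiction with $G$ being chordal. So the whole argument reduces to: shortest-path layers + $z$ one layer deeper $\Rightarrow$ $z$ has exactly two ``low'' neighbors, namely $x$ and $y$; non-adjacency of $x,y$ then blocks the only triangle and chordality finishes it.
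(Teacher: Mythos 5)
Your argument is correct, but it takes a genuinely different route from the paper. The paper's proof is two lines: take an LBFS ordering $\sigma$ of $G$ starting at $r$; since LBFS is layered, $z \in N_G^{i+1}(r)$ appears to the right of both $x,y \in N_G^i(r)$, and since every LBFS ordering of a chordal graph is a PEO (in the paper's convention: the neighbors of a vertex to its left form a clique), the two left-neighbors $x$ and $y$ of $z$ must be adjacent. You instead argue directly from the forbidden-cycle definition of chordality: fix shortest $r$--$x$ and $r$--$y$ paths, note that all their vertices lie in layers $\leq i$ with only the endpoints in layer $i$, so $z$ is adjacent on them only to $x$ and $y$; splice the suffixes after the last common vertex $r'$ together with the edges $xz$, $zy$ into a cycle, and extract a chordless cycle through $z$, which must have length at least $4$ when $xy \notin E(G)$ because no chord is incident to $z$ --- contradicting chordality. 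This is sound; the only blemishes are cosmetic: the spliced cycle has even length $2\bigl(i - dist_G(r,r')\bigr) + 2 \geq 4$ automatically (since $r' \notin \{x,y\}$), independently of whether $xy \in E(G)$ --- the non-adjacency of $x$ and $y$ is only needed to rule out that the \emph{final} chordless cycle is a triangle, which you do state correctly elsewhere --- and one should add that distinct common vertices of the two shortest paths lie in distinct layers, so ``last common vertex'' is well defined and the two suffixes are internally disjoint by its maximality. As for what each approach buys: the paper's proof is shorter but leans on cited machinery (LBFS is layered and computes PEOs on chordal graphs), the same trick it reuses for \cref{lemma:bfs-chordal2}; yours is self-contained and elementary, using nothing beyond BFS layers and induced cycles, at the cost of the shortcutting bookkeeping.
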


\begin{proof}
Let $\sigma$ be an LBFS ordering of $G$ starting with $r$. Vertex $z$ is to the right of both $x$ and $y$ in $\sigma$. Since $\sigma$ is a PEO, the vertices $x$ and $y$ are adjacent. 
\end{proof}

\begin{lemma}\label{lemma:bfs-chordal2}
Let $G$ be a chordal graph and let $r$ be a vertex in $V(G)$. Let $x$ and $y$ be two vertices in $N_G^i(r)$. If $xy \in E(G)$, then $N_G(x) \cap N_G^{i-1}(r) \subseteq N_G(y)$ or $N_G(y) \cap N_G^{i-1}(r) \subseteq N_G(x)$.
\end{lemma}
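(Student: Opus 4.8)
### Plan of Proof

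The plan is to argue by contradiction using the chordality of $G$ together with the distance structure of the BFS layers. Suppose $x, y \in N_G^i(r)$ with $xy \in E(G)$, but neither $N_G(x) \cap N_G^{i-1}(r) \subseteq N_G(y)$ nor $N_G(y) \cap N_G^{i-1}(r) \subseteq N_G(x)$. Then I can pick a vertex $a \in N_G(x) \cap N_G^{i-1}(r)$ with $a \notin N_G(y)$, and a vertex $b \in N_G(y) \cap N_G^{i-1}(r)$ with $b \notin N_G(x)$. Note $a \neq b$ since $a$ is adjacent to $x$ but $b$ is not. I will aim to produce an induced cycle of length at least $4$ through some of these vertices, contradicting chordality.

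The first key step is to handle the edge $ab$. If $ab \notin E(G)$, then the four vertices $a, x, y, b$ already form an induced $C_4$: the edges $ax$, $xy$, $yb$ are present, and the non-edges $ay$, $xb$, $ab$ are exactly what we assumed (or just derived). That is the clean case. The main obstacle is the case $ab \in E(G)$: now $\{a,b,x\}$ and $\{a,b,y\}$ need not form a clean obstruction, so I need a genuinely new path between $a$ and $b$ that avoids $x$ and $y$. Here I would use the layer structure: since $a, b \in N_G^{i-1}(r)$, both have distance $i-1$ to $r$, so there is a shortest $r$-$a$-path $P_a$ and a shortest $r$-$b$-path $P_b$, each of length $i-1$, lying in layers $N_G^0(r), \ldots, N_G^{i-1}(r)$. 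Crucially, no vertex of $P_a \cup P_b$ lies in layer $i$ (they lie in layers $0$ through $i-1$), so in particular neither $x$ nor $y$ lies on these paths. Following $P_a$ from $a$ back toward $r$ and $P_b$ from $b$ back toward $r$ until they first meet gives a path $Q$ from $a$ to $b$ entirely inside layers $0, \ldots, i-1$, internally disjoint from $\{x, y\}$, and of length at least $2$ (since $a \neq b$ and any common vertex is at distance $\geq 1$ from each, and they cannot be adjacent-to-a-common-vertex in a way of length $1$ unless... — here I'd check: $Q$ has length $1$ only if $a, b$ adjacent, but I'm in the subcase where I want to route around; actually even if $ab \in E(G)$, $P_a$ and $P_b$ could share only $r$, giving $Q$ of length $2(i-1) \geq 2$, or they could share more, still giving length $\geq 2$ unless they diverge only at the last step — in that borderline case $Q$ has length $2$ through a common predecessor). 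Then $x \mdoubleplus Q \mdoubleplus y \mdoubleplus x$ — i.e., the cycle $x, a, Q, b, y, x$ — is a cycle of length $\geq 4$; I must verify it is induced, i.e., that no chords exist. The potential chords are: $x$ to an internal vertex of $Q$, and $y$ to an internal vertex of $Q$. But any neighbor of $x$ in $N_G^{i-1}(r)$ that lies on $Q$ would need to be dealt with — this is where the argument gets delicate, since $x$ could have other neighbors in layer $i-1$ sitting on $Q$.

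To make the chord analysis clean, I would instead choose $Q$ more carefully, or better, run the LBFS/PEO argument directly. Since $G$ is chordal, take an LBFS ordering $\sigma$ of $G$ starting at $r$; this is a PEO, and moreover BFS-layered, so layers appear consecutively. Both $a$ and $b$ precede $x$ and $y$ (they are in an earlier layer). Now $x$ has the two left-neighbors... hmm, but $a$ and $b$ need not both be left of, say, just one of $x,y$. The cleanest route: apply \cref{lemma:bfs-chordal1}-style reasoning — in the PEO $\sigma$, consider the later of $x, y$; say $y \succ_\sigma x$. Then $x$ and $b$ are both left-neighbors of $y$? No — $b$ is adjacent to $y$ and $b \prec_\sigma y$, and $x$ is adjacent to $y$ and $x \prec_\sigma y$, so by the PEO property $xb \in E(G)$, contradicting $b \notin N_G(x)$. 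Wait — this requires $x \prec_\sigma y$; if instead $y \prec_\sigma x$, then symmetrically $a$ and $y$ are both left-neighbors of $x$, forcing $ay \in E(G)$, contradicting $a \notin N_G(y)$. So in \emph{either} order we reach a contradiction immediately. This is the approach I would actually write up: pick an LBFS ordering starting at $r$ (which exists and is a PEO by chordality and is layered), observe one of $x, y$ comes before the other, and apply the PEO property to the two neighbors of the later vertex among $\{a, x\}$ or $\{b, y\}$ that lie strictly to its left. The only thing to double-check is that the "bad" neighbors $a$ (of $x$, non-neighbor of $y$) and $b$ (of $y$, non-neighbor of $x$) genuinely lie to the left of the later vertex — but they do, since they are in layer $i-1$ and $x, y$ are in layer $i$, and in a layered ordering all of layer $i-1$ precedes all of layer $i$. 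So the main obstacle essentially evaporates once we commit to the LBFS/PEO viewpoint rather than hunting for an induced cycle by hand.
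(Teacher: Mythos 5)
Your final, committed argument (take an LBFS ordering starting at $r$, use that it is layered so all of $N_G^{i-1}(r)$ precedes the later of $x,y$, and apply the PEO property to two left-neighbors of that later vertex) is exactly the paper's proof, merely phrased as a contradiction with witnesses $a,b$ instead of directly concluding one inclusion after the w.l.o.g.\ choice $x \prec_\sigma y$. The abandoned induced-cycle hunt at the start is harmless since you correctly discard it, and the one point you flag for checking (that $a,b$ lie to the left of the later vertex) is indeed justified by layeredness, so the proposal is correct and essentially identical to the paper's argument.
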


\begin{proof}
Let $\sigma$ be an LBFS ordering starting with $r$. W.l.o.g. we may assume that $x \prec_\sigma y$. All the vertices of $N_G^{i-1}(r)$ are to the left of $y$ in $\sigma$. Since $\sigma$ is a PEO, all the neighbors of $y$ in $N_G^{i-1}(r)$ are adjacent to $x$ and, thus, $N_G(y) \cap N_G^{i-1}(r) \subseteq N_G(x)$.
\end{proof}

The next lemma makes a statement about the distances of neighbors of a vertex in a chordal graph.

\begin{lemma}\label{lemma:bfs-chordal3}
Let $G$ be a connected chordal graph and let $v \in V(G)$. For any $x,y \in N_G(v)$, the distance between $x$ and $y$ in $G - v$ is equal to the distance between $x$ and $y$ in $G[N_G(v)]$. 
\end{lemma}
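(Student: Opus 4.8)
The plan is to prove the two inequalities separately. Since $v\notin N_G(v)$, the graph $G[N_G(v)]$ is a subgraph of $G-v$, so $dist_{G-v}(x,y)\le dist_{G[N_G(v)]}(x,y)$ is immediate (with the usual convention that a distance is $\infty$ if no path exists; this also covers the case where $x$ and $y$ lie in different components of $G-v$). So the whole content is the reverse inequality $dist_{G[N_G(v)]}(x,y)\le dist_{G-v}(x,y)$, and I would obtain it by showing, by induction on $k:=dist_{G-v}(x,y)$, that whenever $x,y\in N_G(v)$ are joined by a path of length $k$ in $G-v$, they are also joined by a path of length $k$ in $G[N_G(v)]$.

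The base cases $k\le 1$ are trivial: if $k=0$ then $x=y$, and if $k=1$ then $xy\in E(G)$, and since $x,y\in N_G(v)$ this edge already lies in $G[N_G(v)]$. For the inductive step, assume $k\ge 2$ and take a shortest $x$-$y$-path $P=(x=p_0,p_1,\dots,p_k=y)$ in $G-v$. The key move is to consider the cycle $C=(v,p_0,p_1,\dots,p_k,v)$ in $G$; it has length $k+2\ge 4$, and its vertices are pairwise distinct because $P$ is a simple path and no $p_i$ equals $v$. Since $G$ is chordal, $C$ has a chord. A chord $p_ip_j$ with $i<j$ and $j-i\ge 2$ is impossible, because $(p_0,\dots,p_i,p_j,\dots,p_k)$ would then be an $x$-$y$-walk in $G-v$ of length $k-(j-i)+1<k$, contradicting minimality of $P$. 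Hence every chord of $C$ is incident to $v$, so there is an index $i$ with $0<i<k$ and $p_i\in N_G(v)$.

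With such a $p_i$ in hand I would finish by splitting $P$ at $p_i$: the subpaths $(p_0,\dots,p_i)$ and $(p_i,\dots,p_k)$ are shortest paths in $G-v$ (a subpath of a shortest path is shortest), so $dist_{G-v}(x,p_i)=i<k$ and $dist_{G-v}(p_i,y)=k-i<k$, and all of $x$, $p_i$, $y$ lie in $N_G(v)$. The induction hypothesis then yields an $x$-$p_i$-path of length $i$ and a $p_i$-$y$-path of length $k-i$ inside $G[N_G(v)]$, whose concatenation is an $x$-$y$-walk of length $k$ in $G[N_G(v)]$; hence $dist_{G[N_G(v)]}(x,y)\le k$. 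I do not expect a serious obstacle here: the only points that genuinely need care are verifying that $C$ is an honest cycle (distinct vertices) and that subpaths of shortest paths stay shortest, so that the induction on $k$ is legitimate; the chordality argument that forces a neighbor of $v$ onto $P$ is the heart of the proof.
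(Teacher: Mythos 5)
Your proof is correct. It differs from the paper's argument mainly in its overall shape: the paper takes a shortest $x$-$y$-path $P$ in $G-v$ and shows \emph{directly} that $P$ cannot leave $N_G(v)$ at all -- if it did, one takes the maximal excursion of $P$ outside $N_G(v)$ (from the last vertex $u$ before leaving to the first vertex $w$ upon return); since a shortest path is induced and none of the interior vertices of that excursion is adjacent to $v$, the excursion together with the edges $uv$ and $vw$ is an induced cycle of length at least four, contradicting chordality. This yields the slightly stronger fact that every shortest $x$-$y$-path of $G-v$ is already a path of $G[N_G(v)]$, and no induction is needed. You instead use the same chordality mechanism (the cycle formed by $v$ and a shortest, hence chordless, path must have a chord, and that chord must be incident to $v$) only to extract a single intermediate vertex $p_i\in N_G(v)$, and then recurse on the two subpaths; this gives the distance equality but produces a possibly different path inside $G[N_G(v)]$ rather than showing the original one stays there. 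Both arguments are sound; yours is a bit longer because of the induction bookkeeping, while the paper's is more economical and more informative about the path itself. One small merit of your write-up is that you state the trivial inequality $dist_{G-v}\le dist_{G[N_G(v)]}$ and the $\infty$-convention explicitly, which the paper leaves implicit.
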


\begin{proof}
Let $P$ be a shortest $x$-$y$-path in $G - v$. If $P$ is a subgraph of $G[N_G(v)]$, then we are done. Otherwise, let $u$ be the first vertex of $P$ (starting from $x$) whose successor on $P$ is not in $N_G(v)$ and let $w$ be the first element of $P$ after $u$ that is an element of $N_G(v)$. Note that $u$ could be equal to $x$ and $w$ could be equal to $y$. The subpath of $P$ between $u$ and $w$ contains at least three vertices. Therefore, this subpath and the edges $uv$ and $vw$ form an induced cycle of length at least four in $G$; a contradiction.
\end{proof}

Using \cref{lemma:bfs-chordal1,lemma:bfs-chordal2,lemma:bfs-chordal3}, we characterize BFS \cf-branch leaves of chordal graphs.

\begin{theorem}\label{thm:f-branch-bfs-chordal}
Let $G$ be a connected chordal graph with $n(G) \geq 2$. A vertex $v \in V(G)$ is an $\cf$-branch leaf of some BFS ordering of $G$ if and only if the radius of $G[N_G(v)]$ is at most two.
\end{theorem}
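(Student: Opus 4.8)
The plan is, in both directions, to pin down a single vertex of $H:=G[N_G(v)]$ whose eccentricity in $H$ is at most two, reasoning throughout in terms of the BFS layers $N_G^{j}(r)$ around the relevant root $r$ and invoking \cref{lemma:bfs-chordal1,lemma:bfs-chordal2,lemma:bfs-chordal3}. (Note $N_G(v)\neq\emptyset$, since $G$ is connected with $n(G)\ge 2$, so $H$ is well defined, and $\operatorname{rad}(H)\le 2$ just means some vertex of $H$ has $H$-eccentricity at most two.)

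For the ``if'' direction I would take a central vertex $c$ of $H$, so $c\in N_G(v)$ and $dist_H(c,w)\le 2$ for every $w\in N_G(v)$, and run a BFS from $c$. The point to verify is that any ordering of $N_G(c)=N_G^{1}(c)$ can be realised immediately after $c$: once $c$ is numbered, every vertex of $N_G(c)$ has label $\{1\}$ and every other vertex has label $\emptyset$, and numbering further vertices of $N_G(c)$ only adds indices larger than $1$, so at each of these steps the eligible set is exactly the set of still-unnumbered vertices of $N_G(c)$. I choose the BFS ordering $\sigma$ so that $v$ is the last vertex of $N_G^{1}(c)$. Then $v$ is not the start vertex, and if $v$ were the leftmost neighbour of some vertex $w$ (i.e.\ $w$ a child of $v$ in the $\cf$-tree), then $w\in N_G^{2}(c)$ and $dist_G(c,w)=2=dist_H(c,w)$, so some vertex $u$ is adjacent in $H$ to both $c$ and $w$; this $u$ lies in $N_G(v)\cap N_G^{1}(c)$ and $u\neq v$, hence $u\prec_\sigma v$, contradicting that $v$ is the leftmost neighbour of $w$. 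So $v$ is a leaf of the $\cf$-tree other than its root, i.e.\ an $\cf$-branch leaf. (Chordality plays no role here.)

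For the ``only if'' direction, let $v$ be an $\cf$-branch leaf of a BFS ordering $\sigma$ with start vertex $r\neq v$, set $d:=dist_G(r,v)\ge 1$, so $v\in N_G^{d}(r)$, and partition $N_G(v)=A\cup B\cup C$ with $A:=N_G(v)\cap N_G^{d-1}(r)$, $B:=N_G(v)\cap N_G^{d}(r)$, $C:=N_G(v)\cap N_G^{d+1}(r)$ (the neighbours of $v$ lie in these three consecutive layers). Let $p$ be the leftmost neighbour of $v$ in $\sigma$; since the layers appear consecutively in $\sigma$ and $v$ has a neighbour in $N_G^{d-1}(r)$, we get $p\in A$. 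I would prove $ecc_H(p)\le 2$. First, \cref{lemma:bfs-chordal1} applied to pairs in $A$ with common neighbour $v$ shows $A$ is a clique, so every vertex of $A$ is within distance one of $p$ in $H$. For $w\in B$, \cref{lemma:bfs-chordal2} applied to $w,v\in N_G^{d}(r)$ gives either $p\in N_G(w)$, or that the leftmost neighbour of $w$ lies in $N_G(v)\cap N_G^{d-1}(r)=A$; either way $dist_H(p,w)\le 2$. For $w\in C$, let $q_w$ be its leftmost neighbour; since $w\neq r$ and $v$ is a leaf of the $\cf$-tree, $q_w\neq v$, hence $q_w\prec_\sigma v$; also $q_w\in N_G^{d}(r)$ and $q_wv\in E(G)$ by \cref{lemma:bfs-chordal1} (common neighbour $w$). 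Now \cref{lemma:bfs-chordal2} applied to $q_w,v$ gives either $p\in N_G(q_w)$, or that the leftmost neighbour $q_w'$ of $q_w$ lies in $A$; in the latter case $q_w'$ is a neighbour of $v$, so $p\preceq_\sigma q_w'$, while (as one checks from the label mechanics, the $\min$ of an unnumbered layer-$d$ vertex's label equals the position of its leftmost neighbour) BFS numbers the vertices of $N_G^{d}(r)$ in non-decreasing order of the position of their leftmost neighbour, so $q_w\prec_\sigma v$ forces $q_w'\preceq_\sigma p$ and therefore $q_w'=p$. Either way $q_w p\in E(G)$, so, $q_w$ being adjacent to $w$ as well, $dist_H(p,w)\le 2$. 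Since $N_G(v)=A\cup B\cup C$, this gives $ecc_H(p)\le 2$ and hence $\operatorname{rad}(H)\le 2$.

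The hard part is the set $C=N_G(v)\cap N_G^{d+1}(r)$ in the ``only if'' direction: the obvious argument only places such a $w$ at distance three from $p$ in $H$ (through a vertex of $A$ and then $q_w$), and bringing this down to two needs the sharper fact that the leftmost neighbour $q_w$ of $w$ is itself adjacent to $p$ --- exactly the step that combines \cref{lemma:bfs-chordal2} with the ``ordered by leftmost-neighbour position'' behaviour of BFS within a layer and with the defining property of $p$ as the globally leftmost neighbour of $v$. (\cref{lemma:bfs-chordal3} can be used to restate the target in terms of distances in $G-v$, but the argument above stays inside $H$.) The remaining points --- consecutiveness of the BFS layers, the degenerate case $d=1$ where $N_G^{d-1}(r)=\{r\}$, and the distinctness of the vertices on the length-two paths produced --- are routine.
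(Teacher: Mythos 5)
Your proof is correct and follows essentially the same route as the paper: in the ``if'' direction you root a BFS at a central vertex of $G[N_G(v)]$ and delay $v$ within the first layer, and in the ``only if'' direction you show that the parent $p$ of $v$ has eccentricity at most two in $G[N_G(v)]$ by splitting $N_G(v)$ over the three layers and combining \cref{lemma:bfs-chordal1,lemma:bfs-chordal2} with the parent-monotonicity of BFS, exactly as the paper does with its vertex $v'$. The only cosmetic difference is that you keep all distance arguments inside $G[N_G(v)]$, so \cref{lemma:bfs-chordal3} is not needed, whereas the paper bounds distances in $G-v$ and transfers them to $G[N_G(v)]$ via \cref{lemma:bfs-chordal3}.
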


\begin{proof}
First assume that $G[N_G(v)]$ has radius two and let $w$ be a central vertex of $G[N_G(v)]$. There is a BFS ordering $\sigma$ that starts with $w$ followed by all neighbors of $w$ that are not $v$. Vertex $v$ is an \cf-branch leaf of $\sigma$ since all neighbors of $v$ have some neighbor in $N_G[w] \setminus \{v\}$ or are equal to $w$. 

Now assume that $v \in N_G^i(r)$ is an \cf-branch leaf of some BFS ordering $\sigma$ of $G$ starting with $r$. Let $T$ be the \cf-tree of $\sigma$ rooted in $r$ and let $v'$ be the parent of $v$ in $T$. Since $T$ is an BFS \cf-tree rooted in $r$, it holds that $v' \in N^{i-1}_G(r)$. 

We claim that in $G - v$ vertex $v'$ has a distance of at most two to every element of $N_G(v)$. Let $w \in N_G(v) \setminus \{v'\}$. If $v'w \in E(G)$, then $v'$ and $w$ have distance one in $G - v$. Therefore, we may assume in the following that $v'w \notin E(G)$. Then \cref{lemma:bfs-chordal1} implies that $w \notin N_G^{i-1}(r)$. Furthermore, the parent of $w$ in $T$, say $w'$, is different from $v$ and $v'$. If $v'w' \in E(G)$, then $v'$ and $w$ have distance two in $G - v$ via the path $(v', w', w)$.  Thus, we may also assume that $v'w' \notin E(G)$.

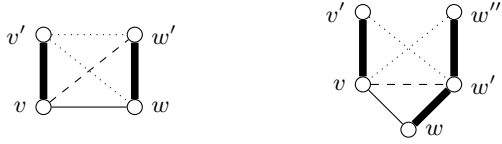
\begin{figure}[t]
\centering
\begin{tikzpicture}[vertex/.style={inner sep=2pt,draw,circle}, path/.style={decoration={snake, amplitude=0.3mm}, decorate}, edge/.style={-}, scale=0.6, noedge/.style={dotted}, every edge quotes/.style = {auto, font=\scriptsize, sloped}]
\footnotesize
\node[vertex,label={180:$v$}]  (v) at (-1,0) {};
\node[vertex,label={0:$w$}]  (w) at (1,0) {};
\node[vertex,label={180:$v'$}]  (v') at (-1,1.6) {};
\node[vertex,label={0:$w'$}]  (w') at (1,1.6) {};

\draw[treeedge] (v) -- (v');
\draw[treeedge] (w) -- (w');
\draw (v) -- (w);
\draw[noedge] (v') -- (w);
\draw[dashed] (v) -- (w');
\draw[dotted] (v') -- (w');

\begin{scope}[xshift=7cm, yshift=0.5cm]
\node[vertex,label={180:$v$}]  (v) at (-1,0) {};
\node[vertex,label={0:$w$}]  (w) at (0,-1) {};
\node[vertex,label={180:$v'$}]  (v') at (-1,1.6) {};
\node[vertex,label={0:$w'$}]  (w') at (1,0) {};
\node[vertex,label={0:$w''$}]  (w'') at (1,1.6) {};

\draw[treeedge] (v) -- (v');
\draw[treeedge] (w) -- (w') -- (w'');
\draw (v) -- (w);
\draw[noedge] (w'') -- (v);
\draw[dashed] (v) -- (w');
\draw[dotted] (v') edge (w');
\end{scope}
\end{tikzpicture}
\caption[BFS \cf-branch leaf of a chordal graph]{Two cases of the proof of \cref{thm:f-branch-bfs-chordal}. The vertical arrangement of the vertices represent their layers. Thick edges are edges of the \cf-tree. Dotted edges are not present. Dashed edges are implied by either \cref{lemma:bfs-chordal1} or \cref{lemma:bfs-chordal2}.}\label{fig:bfs-f-leaf-chordal-proof}
\end{figure}

First assume that $w \in N_G^{i}(r)$ (see left part of \cref{fig:bfs-f-leaf-chordal-proof}). Then $w' \in N_G^{i-1}(r)$. \Cref{lemma:bfs-chordal2} implies that $w'$ is adjacent to $v$ because $vw$, $vv'$, and $ww' \in E(G)$ but $v'w \notin E(G)$. Now the non-existence of $v'w'$ contradicts \cref{lemma:bfs-chordal1}. 

Now assume that $w \in N_G^{i+1}(r)$ (see right part of \cref{fig:bfs-f-leaf-chordal-proof}). Then, due to \cref{lemma:bfs-chordal1}, $vw' \in E(G)$. Since $v'w' \notin E(G)$, the parent of $w'$, say $w''$, is different from $v'$. Since $w'$ is the parent of $w$, it holds that $w' \prec_\sigma v$. This implies that $w'' \prec_\sigma v'$. Therefore, $w''$ is not adjacent to $v$ since, otherwise, $v'$ would not be the parent of $v$. The non-existence of both $v'w'$ and $vw''$ contradicts \cref{lemma:bfs-chordal2}.

Summarizing, $v'$ has distance at most two in $G - v$ to any neighbor of $v$. By \cref{lemma:bfs-chordal3}, $v'$ has distance at most two in $G[N_G(v)]$ to any neighbor of $v$ in $G$ and, thus, $G[N_G(v)]$ has radius at most two.
\end{proof}

Chepoi and Dragan~\cite{chepoi1994linear} presented a linear-time algorithm that computes a central vertex of a chordal graph. As the eccentricity of such a vertex can be computed in linear time using BFS, we can compute the radius of a chordal graph and, in particular, of $G[N_G(v)]$ in linear time. Thus, \cref{thm:f-branch-bfs-chordal} implies a linear-time algorithm for the BFS \cf-branch leaf recognition on chordal graphs.

\begin{corollary}\label{corol:algo-f-branch-bfs-chordal}
Given a connected chordal graph $G$ and a vertex $v \in V(G)$, we can decide in time $\O(n(G) + m(G))$ whether $v$ is an \cf-branch leaf of some BFS ordering of $G$.
\end{corollary}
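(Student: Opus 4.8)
The plan is to reduce the question to computing the radius of an induced subgraph by invoking \cref{thm:f-branch-bfs-chordal}, which states that $v$ is an $\cf$-branch leaf of some BFS ordering of $G$ if and only if $rad(G[N_G(v)]) \le 2$. So it suffices to decide, in time $\O(n(G)+m(G))$, whether the radius of $G' := G[N_G(v)]$ is at most two.

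First I would construct $G'$ explicitly in linear time: mark the vertices of $N_G(v)$, then scan the adjacency lists of the marked vertices and keep exactly the edges joining two marked vertices. Since $V(G') \subseteq V(G)$ and $E(G') \subseteq E(G)$, this costs $\O(n(G)+m(G))$, and $G'$ is chordal, being an induced subgraph of a chordal graph. If $|N_G(v)| = 1$, then $G'$ is a single vertex of radius $0 \le 2$ and we answer ``yes'' (note $N_G(v) \neq \emptyset$ because $G$ is connected with $n(G) \ge 2$).

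Next I would run one BFS on $G'$ to test whether $G'$ is connected. If $G'$ is disconnected, then every vertex of $G'$ has infinite eccentricity, so $rad(G') > 2$ and, by \cref{thm:f-branch-bfs-chordal}, $v$ is not an $\cf$-branch leaf; we answer ``no''. Otherwise $G'$ is a connected chordal graph, so I can apply the linear-time algorithm of Chepoi and Dragan~\cite{chepoi1994linear} to obtain a central vertex $c$ of $G'$, and a further BFS from $c$ in $G'$ computes $ecc_{G'}(c) = rad(G')$ in linear time. We answer ``yes'' if and only if $ecc_{G'}(c) \le 2$. Correctness is immediate from \cref{thm:f-branch-bfs-chordal}, and every step is a single mark-and-scan pass or a single graph search on a graph of size $\O(n(G')+m(G')) = \O(n(G)+m(G))$, so the total running time is $\O(n(G)+m(G))$.

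There is no real obstacle in this argument: the content lies entirely in \cref{thm:f-branch-bfs-chordal}. The only points requiring a moment of care are that $G'$ must be checked for connectivity before invoking a center-finding subroutine that presumes a connected input (and that a disconnected $G'$ yields the negative answer), and the (immediate) observation that $G'$ inherits chordality so that the Chepoi--Dragan algorithm applies.
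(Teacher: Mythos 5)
Your proposal is correct and follows essentially the same route as the paper: apply \cref{thm:f-branch-bfs-chordal} and then decide whether $rad(G[N_G(v)])\le 2$ in linear time via the Chepoi--Dragan central-vertex algorithm plus one BFS for its eccentricity. Your extra bookkeeping (explicitly building $G[N_G(v)]$, handling the trivial and disconnected cases before invoking the center-finding routine) is a harmless refinement of the argument the paper gives in the paragraph preceding the corollary.
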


\subsection{Branch Leaves of DFS}

As we have seen in \cref{thm:f-branch-leaves-dfs}, the \cf-branch leaf recognition problem of DFS is \NP-complete on chordal graphs. However, there is a simple characterization of DFS \cf-branch leaves of split graphs.

\begin{theorem}\label{thm:f-branch-bfs-dfs-split}
Let $G$ be a connected split graph with $n(G) \geq 2$. A vertex $v \in V(G)$ is an $\cf$-branch leaf of some DFS ordering if and only if $v$ is not a cut vertex of $G$.
\end{theorem}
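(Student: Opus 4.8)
The plan is to get one direction for free from \cref{thm:l-root-leaf-all} and to prove the other by running a $+$-search whose $\cf$-tree has $v$ as a leaf. For the forward direction, every DFS ordering is a GS ordering, so if $v$ is an $\cf$-branch leaf of some DFS ordering, then it is an $\cf$-branch leaf of some GS ordering, and hence $v$ is not a cut vertex by the equivalence of \ref{item:root-leaf-generic-fbranch} and \ref{item:root-leaf-generic-cut} in \cref{thm:l-root-leaf-all}. No further work is needed here.

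For the backward direction, fix a partition $V(G)=C\uplus I$ with $C$ a clique and $I$ an independent set, and assume $v$ is not a cut vertex. If $n(G)=2$ the graph is a single edge, whose second vertex is trivially an $\cf$-branch leaf, so assume $n(G)\ge 3$; then $C\neq\emptyset$, and if $v\in C$ then in fact $|C|\ge 2$, since $|C|=1$ would make $G$ a star on at least three vertices with centre $v$, contradicting that $v$ is not a cut vertex. I would now build a vertex ordering $\rho$ that lists a ``clique block'' $B$ first: if $v\in I$, set $B:=C$ and let $\rho$ begin with the vertices of $B$; if $v\in C$, set $B:=C\setminus\{v\}$ and let $\rho=B\mdoubleplus v\mdoubleplus(\text{the vertices of }I)$. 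Let $\sigma$ be the DFS\+$(\rho)$ ordering of $G$, which is a DFS ordering by definition.

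The key claim is that $\sigma$ numbers all of $B$ before numbering anything outside $B$, which I would prove by induction on the step index. The first vertex of $\sigma$ is the first vertex of $\rho$ and hence lies in $B$ (as $B\neq\emptyset$ and $B$ precedes everything else in $\rho$). Suppose the first $i\ge 1$ vertices of $\sigma$ all lie in $B$ and $B$ is not yet exhausted. Every label is then a subset of $\{1,\dots,i\}$, and every still-unnumbered vertex of $B$ is adjacent to all $i$ already-numbered vertices (which are clique vertices) and hence has the maximum possible label $\{1,\dots,i\}$; by the definition of $\prec_{DFS}$ these vertices are therefore among the eligible ones, so the leftmost eligible vertex in $\rho$ lies in $B$ and DFS\+$(\rho)$ picks a vertex of $B$. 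By induction, $\sigma$ lists $B$ first; in particular $v$ is not the first vertex of $\sigma$, since $B\neq\emptyset$ lies entirely to its left. It remains to check that $v$ is a leaf of the $\cf$-tree of $\sigma$, i.e.\ that no vertex has $v$ as its leftmost neighbour. If $v\in I$, then $N_G(v)\subseteq C=B$ lies to the left of $v$, so no neighbour of $v$ picks $v$ as leftmost neighbour. If $v\in C$, then $C\setminus\{v\}=B$ lies to the left of $v$, which handles the clique neighbours of $v$; and since $v$ is not a cut vertex and $n(G)\ge 3$, every neighbour $u\in I$ of $v$ has a further neighbour, necessarily in $C\setminus\{v\}$ (otherwise $G-v$ would contain $u$ as an isolated vertex among at least two vertices), and that neighbour lies in $B$, hence to the left of $v$. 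In both cases $v$ is nobody's leftmost neighbour, so $v$ is a branch leaf of the $\cf$-tree of $\sigma$.

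The step I expect to need the most care is the induction on DFS\+$(\rho)$: one has to argue precisely, from the partial label order $\prec_{DFS}$, that at every step before $B$ is exhausted the eligible set contains an unnumbered vertex of $B$ and that all eligible vertices lying outside $B$ (including $v$ itself, in the case $v\in C$) come later in $\rho$, so that the leftmost-in-$\rho$ tie-break genuinely keeps the search inside $B$. One should also keep the degenerate cases ($n(G)=2$, and $v\in C$ with $|C|=1$) separated out so that the ``every $I$-neighbour of $v$ has another neighbour'' claim is actually available where it is used.
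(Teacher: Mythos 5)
Your proof is correct and follows essentially the same route as the paper: the forward direction via \cref{thm:l-root-leaf-all}, and the backward direction by exhibiting a DFS ordering that visits $C\setminus\{v\}$ first, so that every remaining vertex already has a neighbor to the left and $v$ gets no children in the $\cf$-tree. The only difference is presentational: where the paper simply asserts that such a DFS ordering exists, you justify it carefully via the DFS\textsuperscript{+}$(\rho)$ tie-breaking and an induction on the labels, which is a valid (if more verbose) way to make the same construction rigorous.
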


\begin{proof}
Due to \cref{thm:l-root-leaf-all}, no cut vertex of a graph $G$ is an $\cf$-branch leaf of any GS ordering. 
It remains to show that every vertex $v$ of a connected split graph $G$ for which $G -v$ is connected is an $\cf$-branch leaf of some DFS ordering of $G$. For the connected split graph with two vertices, this is straightforwardly true. Thus, we may assume in the following that $n(G) \geq 3$. Let $(C,I)$ be a the partition of $V(G)$ into a clique $C$ and the independent set $I$. First note that $C$ must contain at least one vertex that is not $v$ since, otherwise, $v$ would be a cut vertex of $G$. Obviously, there is a DFS ordering $\sigma$ of $G$ starting with the vertices of $C \setminus \{v\}$. As $v$ is not a cut vertex, every vertex of $I$ has at least one neighbor in $C \setminus \{v\}$. Thus, $v$ is an \cf-branch leaf of $\sigma$.
\end{proof}

As cut vertices can be identified in linear time, \cref{thm:f-branch-bfs-dfs-split} leads directly to a linear-time algorithm for the DFS \cf-branch leaf recognition on split graphs.

\begin{corollary}\label{corol:algo-f-branch-dfs-split}
Given a connected split graph $G$ and a vertex $v \in V(G)$, we can decide in time $\O(n(G) + m(G))$ whether $v$ is an \cf-branch leaf of some DFS ordering of $G$.
\end{corollary}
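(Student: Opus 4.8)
The plan is to reduce the decision problem to cut-vertex testing via \cref{thm:f-branch-bfs-dfs-split} and then invoke a classical linear-time articulation-point routine.

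First I would note that \cref{thm:f-branch-bfs-dfs-split} provides an exact combinatorial characterization: on a connected split graph $G$ with $n(G) \geq 2$, a vertex $v$ is an $\cf$-branch leaf of some DFS ordering of $G$ if and only if $v$ is \emph{not} a cut vertex of $G$, i.e., if and only if $G - v$ is connected. Hence deciding whether $v$ is an $\cf$-branch leaf of some DFS ordering is equivalent to deciding whether $v$ is a non-cut-vertex, and it remains only to carry out the latter test within the stated time bound.

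Second I would recall that the set of all cut vertices of a connected graph can be computed in time $\O(n(G) + m(G))$ by the standard depth-first-search lowpoint method: run a DFS from an arbitrary root $r$, record discovery times $d(\cdot)$ and lowpoints $\mathrm{low}(\cdot)$, and then a non-root vertex $u$ is a cut vertex exactly when it has a DFS-tree child $c$ with $\mathrm{low}(c) \geq d(u)$, while $r$ is a cut vertex exactly when it has at least two DFS-tree children. Running this once and checking whether $v$ lies in the resulting set answers the question in linear time; alternatively, a single BFS or DFS on $G - v$ decides connectivity directly, again in $\O(n(G) + m(G))$ time. Combining this with the characterization above yields the corollary.

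There is essentially no obstacle here: the whole content sits in \cref{thm:f-branch-bfs-dfs-split}, and the corollary is just that theorem plus an off-the-shelf linear-time subroutine. The only points worth a sentence are that the hypotheses of \cref{thm:f-branch-bfs-dfs-split}---$G$ connected, $n(G) \geq 2$---are exactly those assumed in the corollary, so the reduction applies verbatim, and that if being a split graph is not promised as part of the input it can itself be verified in linear time, so this does not affect the overall running time.
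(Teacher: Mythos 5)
Your proposal is correct and follows exactly the paper's route: the corollary is obtained by combining the characterization in \cref{thm:f-branch-bfs-dfs-split} (a vertex of a connected split graph is an \cf-branch leaf of some DFS ordering if and only if it is not a cut vertex) with the standard linear-time identification of cut vertices. Nothing further is needed.
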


In contrast to this result, it is \NP-hard to decide whether a vertex of a split graph is a DFS \cl-branch leaf (see \cref{corol:l-branch-leaves-dfs}). Thus, the \cl-branch leaf recognition of DFS seems to be harder than the \cf-branch leaf recognition of DFS, a surprising contrast to the hardness of the DFS \cf-tree recognition problem~\cite{scheffler2022recognition} and the easiness of the DFS \cl-tree recognition problem~\cite{hagerup1985biconnected,korach1989dfs}. Recall that we have made a similar observation for the complexity of the branch leaf and tree recognition of BFS (see \cref{thm:f-branch-leaves-bfs,corol:l-branch-leaves-bfs-bipartite-algo}).

\newpage

\bibliographystyle{plainurl}
\bibliography{leaf-recognition}

\end{document}